\theoremstyle{definition}
\newtheorem{myDef}{Definition}
\newtheorem{myThe}{Theorem}
\newtheorem{myLem}{Lemma}
\begin{document}
%
\title{Robust Fuzzy-Learning For Partially Overlapping Channels Allocation In UAV Communication Networks}
%
%
%

\author{Chaoqiong~Fan,
        Bin~Li,
        Jia~Hou,
        Yi~Wu,
        Weisi~Guo,
        Chenglin~Zhao
\thanks{Chaoqiong~Fan, Bin Li and Chenglin Zhao are with the School of Information and Communication Engineering (SICE),
Beijing University of Posts and Telecommunications (BUPT), Beijing, 100876, China. (Email: stonebupt@gmail.com).}
\thanks{Jia Hou is with School of Electronics and Information Engineering, Soochow University, Suzhou 215006, China. (Email: houjiastock@hotmail.com.)}
\thanks{Yi~Wu is with Fujian Provincial Engineering Technology Research Center of Photoelectric Sensing Applications, Fujian Normal University, Fuzhou, Fujian 350007, China. (Email: wuyi@fjnu.edu.cn.) }
\thanks{Weisi Guo is with School of Engineering, University of Warwick,
West Midlands, CV47AL. (Email: Weisi.Guo@warwick.ac.uk) }
}

%
%



\maketitle

\begin{abstract}
In this paper, we consider a mesh-structured unmanned aerial vehicle (UAV) networks exploiting partially overlapping channels (POCs). For general data-collection tasks in UAV networks, we aim to optimize the network throughput with constraints on transmission power and quality of service (QoS).
As far as the highly mobile and constantly changing UAV networks are concerned, unfortunately, most existing methods rely on definite information which is vulnerable to the dynamic environment, rendering system performance to be less effective.
In order to combat dynamic topology and varying interference of UAV networks, a robust and distributed learning scheme is proposed.
Rather than the perfect channel state information (CSI), we introduce uncertainties to characterize the dynamic channel gains among UAV nodes, which are then interpreted with fuzzy numbers.
Instead of the traditional observation space where the channel capacity is a crisp reward, we implement the learning and decision process in a mapped fuzzy space. This allows the system to achieve a smoother and more robust performance by optimizing in an alternate space.
To this end, we design a fuzzy payoffs function (FPF) to describe the fluctuated utility, and the problem of POCs assignment is formulated as a fuzzy payoffs game (FPG).
Assisted by an attractive property of fuzzy bi-matrix games, the existence of fuzzy Nash equilibrium (FNE) for our formulated FPG is proved.
Our robust fuzzy-learning algorithm could reach the equilibrium solution via a least-deviation method.
Finally, numerical simulations are provided to demonstrate the advantages of our new scheme over the existing scheme.
\end{abstract}

\begin{IEEEkeywords}
Unmanned Aerial Vehicles (UAVs), Partially Overlapping Channels (POCs), quality of service (QoS), fuzzy game, robust fuzzy-learning algorithm, distributed channel allocation.
\end{IEEEkeywords}

%
\IEEEpeerreviewmaketitle

\section{Introduction}
%
%
%
%

\IEEEPARstart{T}{riggered} by the development of automation and sensor technology, unmanned aerial vehicles (UAVs) have become increasingly prevalent in military, public and civil applications, such as autonomous combat, target detection, video surveillance, data collection, disaster management, network coverage extension and so on \cite{zeng2016wireless,koulali2016a}.
With the distinctive advantages of high mobility, quick deployment, cost-effective and line-of-sight (LOS) or near LOS communication channels, UAVs open a promising prospect to the future development of society and technology, and therefore, have attracted tremendous interest from both academia and industry \cite{gupta2016survey,jiang2012optimization}.
To support diverse applications of UAVs, reliable and effective information transmission within UAVs network and with the ground control station (GCS) is of crucial importance \cite{goddemeier2012role}. However, communication between a swarm of UAVs can become unreliable especially when considering the high mobility of UAVs, the constrained power of hardware, the throughput performance requirements of data packets, and the limited amount of radio resources \cite{liu2014hawk}.

Radio resources, particularly the available channels in specific wireless spectrum, are generally regarded as a main factor affecting the communication quality \cite{naeem2014resource}.
Besides, the rapid development of wireless technologies poses growing demands for the limited spectrum resource \cite{andrews2014will}.
Therefore, improving the efficiency of channel utilization becomes more significant for emerging wireless services \cite{chao2017learning}, including the UAV networks.
Even for the military UAV applications with adequate spectrum resource, enhancing spectrum efficiency is no doubt of great importance to improve the reliability and effectiveness of data transmission \cite{orfanus2016self}.
To this end, the optimal resources allocation with higher utilization efficiency along with better network performance in UAV systems is naturally a fundamental and important problem to be addressed.
Since the number of non-overlapping channels (orthogonal channels) is limited by the available spectrum, the partially overlapping channels (POCs) have become the focus of research in the past five years \cite{duarte2012partially}.
As one of the most prospective techniques in multi-radio multi-channel (MRMC) field, POCs specified by IEEE 802.11b/g standard can improve the network throughput, by permitting more parallel transmissions under the tolerable interference \cite{cui2011partially}.
As such, the allocation of POCs should be properly designed and, otherwise, adjacent channels and self interference may become serious, by noticeably degrading network performance instead of improving it \cite{yong2016partially}.

There are few studies on optimizing channel/spectrum resource in UAV networks \cite{si2015dynamic, li2017optimal}.
Under a configuration of orthogonal channels, a navigation data-assisted opportunistic spectrum access (OSA) scheme in heterogeneous UAV networks is proposed \cite{si2015dynamic}.
In \cite{li2017optimal}, a resource allocation scheme is presented to minimize mean packet transmission delay in multi-layer UAV networks.
In order to accommodate more parallel transmission channels, the POCs assignment in a combined UAV-D2D network based on the crisp game theory is firstly studied in \cite{tang2017ac}.
Various schemes have been proposed to implement the optimal assignment of POCs in the context of static wireless networks, e.g. the WLAN scenarios \cite{tang2017ac,mishra2006partially,ding2012using,duarte2010partially,li2017throughput,zhao2016dapa,liu2010load,xu2013opportunistic}.
Research in \cite{mishra2006partially} demonstrates that POCs can efficiently avoid interference and improve the overall throughput by proper assignment.
In \cite{ding2012using}, a greedy algorithm is presented for POCs allocation to maximize the network throughput, while a heuristic POCs assignment algorithm is proposed in \cite{duarte2010partially}. \cite{li2017throughput} presentes an interference-tolerant medium access method to optimize the WLAN/cellular integrated network (WCIN) throughput by utilizing POCs.
In \cite{zhao2016dapa} the authors study the problem of interaction between density of access points (APs) and POCs assignment with parameter tuning.
\cite{liu2010load} proposes a load-aware channel assignment exploiting POCs for wireless mesh networks.
The problem of distributed channel allocation in OSA networks with POCs using a game theoretic learning algorithm is investigated in \cite{xu2013opportunistic}.

It is noteworthy that, however, the distinguishing features of UAV networks would compromise the effectiveness of existing methods developed for POCs allocation in ground wireless networks.
First, most schemes rely on an ideal assumption that the channel state information (CSI) is static and can be perfectly estimated.
Unfortunately, due to the high mobility of UAV nodes, the intermittence of links, the dynamics of topologies and the short-training duration, such assumptions will become impracticable in UAV scenarios. Therefore, most schemes may be vulnerable to dynamic environments \cite{fan2017robust}.
Moreover, taking the hardware limitations in UAVs into consideration (e.g. volume and weight), a UAV node is generally energy-constrained \cite{nguyen2013energy}.
Thus, the information exchange within a whole network is resource-demanding and tends to be formidable.
As far as UAV scenarios are concerned, existing learning schemes premised on global and definite knowledge will be no longer reliable, by greatly deteriorating the network performance.
To the best of our knowledge, robust channel allocation for dynamic UAV networks has not been reported in previous works, especially when considering the ubiquitous varying CSI and the complex coupling interferences.

In this paper, we focus on the robust and distributed POCs assignment in UAV communication networks, by fully taking its intrinsic dynamics and uncertainties in to consideration.
To be specific, we aim specially at realizing robust channel accessing in the presence of uncertain environmental knowledge, and simultaneously, guarantee the QoS-provisioning data transmission under time-varying mutual interference \cite{Perlaza2012quality}.
As opposed to the previous crisp game-theoretical approaches, in this work we propose a novel robust fuzzy-learning scheme for distributed channel allocation in UAV communication networks.
By mapping the uncertain utility from a direct observation space to another fuzzy space, and with aid of fuzzy-logic analysis, our proposed method effectively relaxes the sensitiveness to changing environments. It thereby ensures a robust channel accessing and QoS-aware transmissions even in dynamic UAV scenarios. To sum up, the main contributions of this work are summarized as follows.

\begin{enumerate}[(1)]
 \item We formulate an optimization problem of POCs allocation in UAV communication networks.
       Specifically, we consider the mesh-structured UAV network and introduce a virtual \emph{interference factor} to thoroughly characterize the coupling network interferences.
       By taking the QoS-provisioning requirement into account, we model the optimal POCs allocation in highly dynamic environments as one global throughput maximization problem with multiple constraints.
 \item We develop a fuzzy payoffs game (FPG) to describe the optimal POCs allocation with uncertain dynamics, and investigate the property of FPG to ensure the existence of fuzzy Nash equilibrium (FNE).
       In order to alleviate the sensitiveness to varying CSI and coupling interferences, we employ the fuzzy number to describe the uncertain channel gains, and use the fuzzy payoffs function (FPF) to evaluate the fluctuated utility of UAV nodes \cite{li2013effective}.
       Owing to the fuzzy-number representation and the fuzzy-logic computation, our new FPG can effectively address the channel resource competition with dynamic and uncertain environmental information.
 \item We design a robust fuzzy-learning algorithm for distributed POCs allocation.
       For the FPG evolving the fuzzy numbers, we first cope with the fuzzy payoffs in the mapped fuzzy space, and calculate the priority vector of channels with the assistance of fuzzy preference relation (FPR).
       Relying on the priority vector derived in a fuzzy space, the UAV nodes can implement a robust fuzzy learning and, therefore, distributed updating to achieve the FNE of the formulated FPG.
       In this regards, our scheme is capable of combating the dynamic environments and thereby realizing the optimal POCs allocation to maximize the global throughput with the predefined constraints.
 \item We evaluate the performances of our robust fuzzy-learning scheme in the mesh-structured UAV networks.
       Numerical results show that our proposed scheme can achieve the maximum throughput and improve the resources efficiency, even with the dynamic and uncertain utility, whereas the crisp-game based scheme is less effective in terms of robust allocation.
       We demonstrate our new fuzzy learning scheme can significantly improve the global throughput ($>$60\%) and the allowable active-link numbers, which hence provides great promises to emerging UAV communication networks.
\end{enumerate}

The remainder of this paper is organized as follows. In Section II, we formulate the optimal POCs allocation problem with constrains for the mesh-structured UAV networks. In Section III, based on the preliminaries of game theory and fuzzy set theory, we develop a FPG, and further prove the existence of FNE. The robust fuzzy-learning algorithm for distributed POCs allocation in UAV networks is proposed in Section IV. The performances of our proposed scheme are demonstrated via numerical simulations in Section V. Finally, we draw the conclusions of our work in Section VI.

\section{System Model and Problem Formulation}
\begin{figure}[!t]           
 \centering
 \includegraphics[width=80mm]{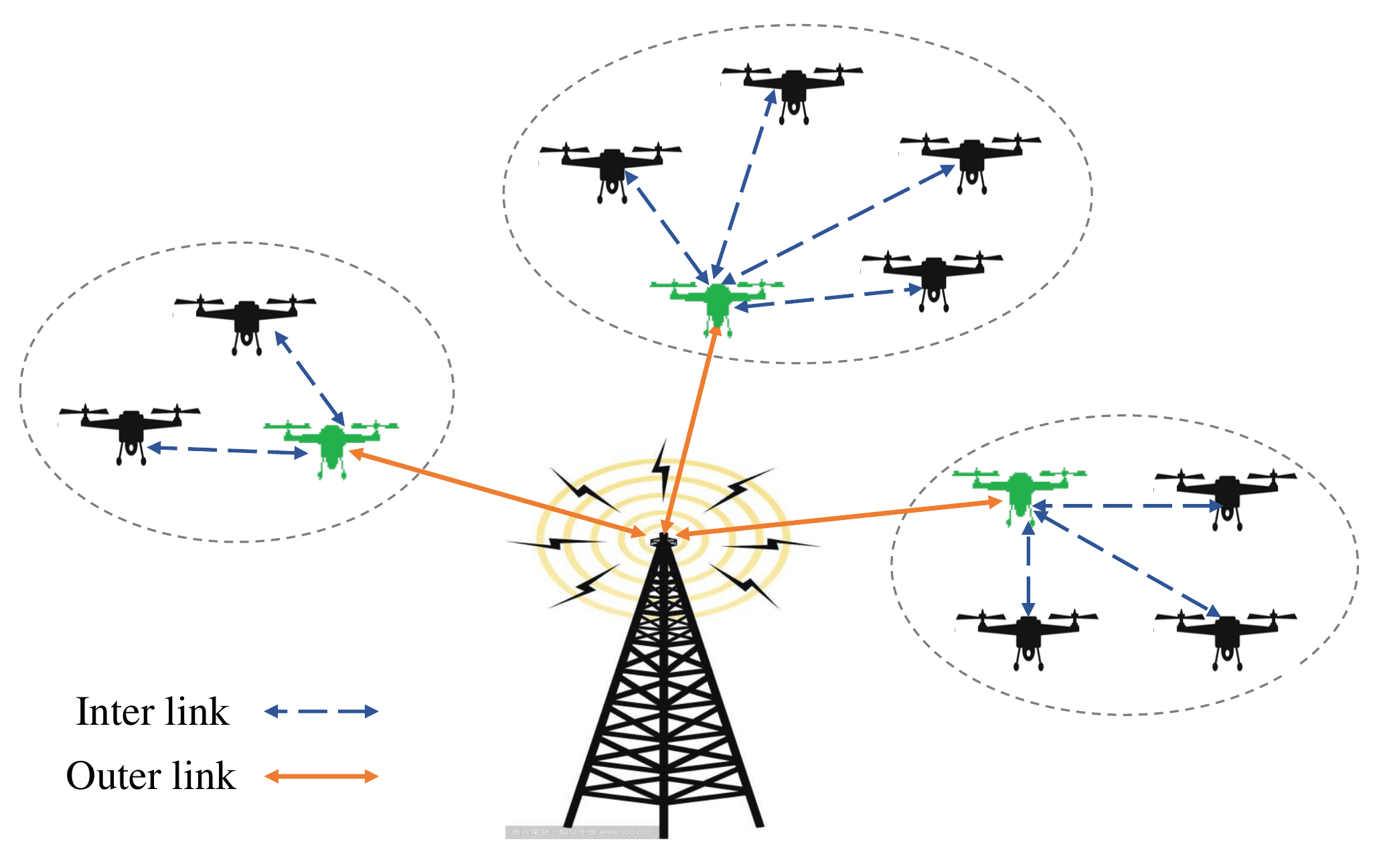}
 \captionsetup{margin=10pt,format=hang,justification=justified}      
 \caption{The system model of clustered mesh architecture UAVs network.}
\end{figure}
\subsection{Network Architecture}
We consider a mesh-structured UAVs network, where the UAV nodes act as clients (such as undertaking data-collection tasks) and attempt to convey information to others or to the GCS with limited number of channel resources. As shown in Fig. 1, the UAV nodes are randomly distributed in a 3-D space, and will form clusters according to their spatial positions \cite{breheny2003using}. A cluster unit consists of a cluster head (CH) and multiple cluster members (CMs). The CMs use inter-links to communicate with their corresponding CHs, while the CHs use outer-links to forward information to GCS. Compared with the outer-links, the length of the inter-links is much shorter, which results in reduced downlink bandwidth requirements and improved energy efficiency. Considering the energy limitation of the CHs as well as the QoS requirements of the CMs, the number of UAV nodes in a specific cluster is bounded by $C_{th}$.

Based on the above constructed UAV communication architecture, we denote the set of UAV nodes as $\mathcal{N}=\{1,2,...,N\}$, in which one UAV can be modeled by the Poisson Point Process (PPP). The full state of each UAV node is given by:
\begin{equation}
\Phi_n=\Big\{L_n,S_n,P_n,A_n,\Theta_n\Big\},n\in\mathcal{N}.
\end{equation}

Here $L_n=(x_n,y_n,z_n)$ is the spatial position of node $n$ \cite{li2015deep}. A Bernoulli random variable $S_n\in\{0,1\}$ characterizes the role mode of node $n$ in a cluster. Specifically, if node $n$ acts as a CH, $S_n=1$, otherwise, $S_n=0$. $P_n\in\{P_M,P_H\}$ is the transmission power of node $n$, which is determined by $S_n$, i.e.,
\begin{eqnarray}
P_n=\begin{cases}P_M,& S_n=0,\\P_H,& S_n=1.\end{cases}\nonumber
\end{eqnarray}

A Bernoulli random variable $A_n\in\{0,1\}$ represents the transmission state of node $n$, i.e., if node $n$ is active in the current time slot, $A_n=1$, else $A_n=0$. $\Theta_n\in\mathcal{H}$ denotes the corresponding CH of node $n$, and $\mathcal{H}=\{n\in\mathcal{N}|S_n=1\}$ is the CHs set. Based on the clustered structure of UAV networks, the set of UAV nodes $\mathcal{N}$ can be rewritten as $\mathcal{N}=\{\mathcal{C}_1,\mathcal{C}_2,...,\mathcal{C}_{|\mathcal{H}|}\}$, where $\mathcal{C}_i$ is the UAV sets of cluster $i$.

The POCs are assumed to support the communication of the clustered UAV networks. Denote the set of the available channels as $\mathcal{M}=\{1,...,m,...,M\}$, and the minimum channel separation for two channels to be regarded orthogonal as $\tau$. Therefore, the maximum number of the orthogonal channels can be expressed as $O_M=[\frac{M}{\tau}]$. For a specific channel $m$, the set of orthogonal channels is denoted as $\mathcal{M}^m_{OC}$, and the orthogonal channels set class of $\mathcal{M}$ is denoted as $\mathcal{M}_{OC}$, i.e. $\mathcal{M}_{OC}=\{\mathcal{M}^m_{OC}|m\in\mathcal{M}\}$. Specifications on the above parameters can be found in IEEE 802.11b/g standard. An illustrative partition of the POCs is shown in Fig. 2.
\begin{figure}[!t]           
 \centering
 \includegraphics[width=80mm]{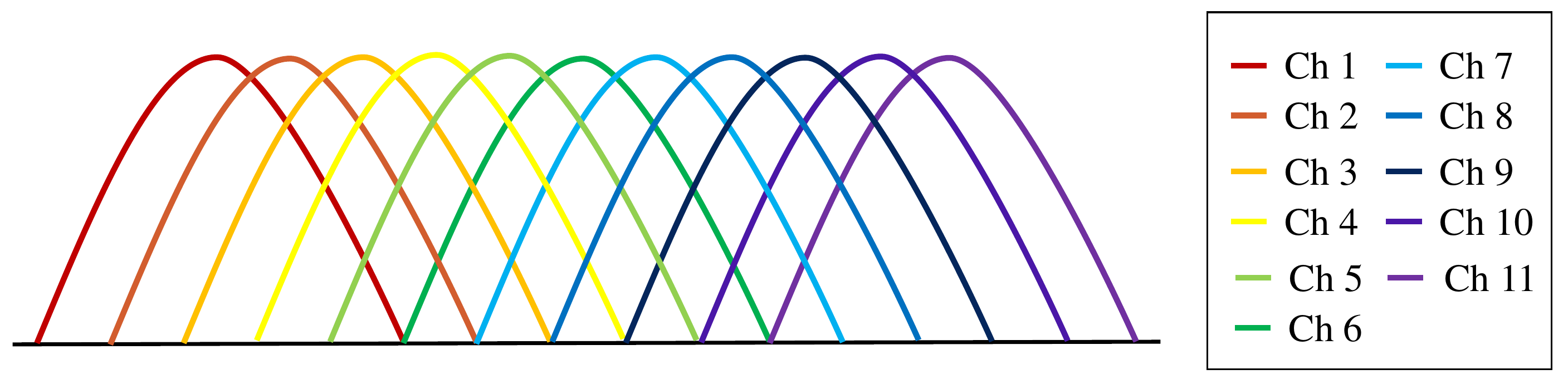}
 \captionsetup{margin=10pt,format=hang,justification=justified}      
 \caption{The partially overlapping channels(802.11b/g standard).}
\end{figure}

\subsection{Interference Model}
For a wireless network with MRMC, there are three different types of interferences which should be addressed to ensure the reliable transmission of network nodes: co-channel interference, adjacent channels interference, and self interference (e.g. when a single node utilizes two adjacent channels). To comprehensively describe the interference, a metric named \emph{Interference Factor} (IF) is recommended \cite{duarte2012partially}, which is defined as a ratio of spatial distance and \emph{Interference Range} (IR) between two nodes, and represents the effective channel separation level. To be specific, the IR refers to the minimum distance that two UAVs should obtain to avoid interference, and it is related with a channel separation factor $\delta=|i-j|$ (e.g. between channel $i$ and $j$). Based on the real measurements \cite{feng2008characterizing} and the scale-up degree \cite{duarte2012partially}, the relationships between IR and $\delta$ are given in Table I, in which the IR is measured in meters.
\renewcommand\arraystretch{1.5}
\begin{table}[h]
    \caption{Interference Range (IR)}
    \begin{center}
        \begin{tabular*}{8cm}{|p{0.7cm}<{\centering}|p{0.7cm}<{\centering}|p{0.7cm}<{\centering}|p{0.7cm}<{\centering}|p{0.7cm}<{\centering}|p{0.7cm}<{\centering}|p{0.7cm}<{\centering}|}
            \hline
            $\delta$ &0 &1 &2 &3 &4 &$\geqslant$5\\ \hline
            $IR(\delta)$ &132.6 &90.8 &75.9 &46.9 &32.1 &0\\
            \hline
        \end{tabular*}
    \end{center}
\end{table}

Denote the distance between two nodes $n_1$ and $n_2$ operating with channel $i$ and $j$ as $d_{n_1n_2}$, then the IF can be evaluated via the following three cases:
\begin{enumerate}[(1)]
 \item $IF(\delta,d_{n_1n_2})=0$, when $\delta\geqslant5$ or $d_{n_1n_2}>IR(\delta)$.

In this case, the nodes are assigned into orthogonal channels or have enough distance to avoid interference. Thus, there is no interference between them.
 \item $IF(\delta,d_{n_1n_2})=IR(\delta)/d_{n_1n_2}$, when $0\leqslant\delta<5$ and $0<d_{n_1n_2}\leqslant IR(\delta)$.

When two nodes occupy the overlapping channels, and meanwhile, the spatial distance between them is less than the IR, then the co-channel interference ($\delta=0$) or the adjacent channels interference ($0<\delta<5$) will arise.
 \item $IF(\delta,d_{n_1n_2})=\infty$, when $0\leqslant\delta<5$ and $d_{n_1n_2}=0$.

This situation corresponds to the self interference, which is excluded for its serious damages to the QoS. That is to say, two overlapping channels ($\delta<5$) will not be assigned to a single node.
\end{enumerate}

Premised on the above interference model to analyze POCs, we will design a robust channel allocation scheme to avoid co-channel and adjacent channels interference, thereby enhance the overall throughput of UAV communication networks in the dynamic and uncertain environments.

\subsection{Problem Formulation}
The high mobility, dynamic topology, intermittent links and varying link quality are the inherent properties of UAV communication networks, which need to be carefully addressed in practice.
To illustrate the dynamics of UAV nodes, we apply the Paparazzi mobility model (PMM), which is a stochastic mobility model to simulate UAVs behavior \cite{bouachir2014mobility}. The PMM consists of five possible movement types: \emph{Stay-At}, \emph{Way-point}, \emph{Eight}, \emph{Scan}, \emph{Oval}, which have different trajectories. Following a PMM, the spatial position $L_n$ of UAV node $n$ is changeable, making the link distance between the node $n$ and the destination node (the CHs or the GCS) time-variant. Thus, there would exist a discrepancy of link distance between the real value $D_n$ and the estimated value $\hat{D}_n$, i.e.,
\begin{equation}
|D_n-\hat{D}_n|=\Delta D_n,
\end{equation}
where $\Delta D_n$ is the uncertain boundary.

By introducing a scaling factor $\varepsilon_{n,m}$ \cite{feng2008how}, the channel gain of UAV node $n$, which is related with the link distance, is denoted as:
\begin{equation}
h_{n,m}=K\varepsilon_{n,m}(D_0/D_n)^{\varsigma_m}, \quad n\in\mathcal{N}, m\in\mathcal{M},
\end{equation}
where $K$ is a constant to reflect the influence of antenna gain and the average channel attenuation, $D_0$ is a reference distance which is fixed to be 1$\thicksim$10m indoors and 10$\thicksim$100m outdoors, and $\varsigma_m$ is the path loss exponent.

Due to the dynamic characteristics of UAV networks, the CSI tends to be uncertain and can be hardly estimated. Denote the imperfect estimation of channel gain as $\hat{h}_{n,m}$ \cite{li2016deep}, then the uncertain channel gain can be expressed as:
\begin{equation}
h_{n,m}=\hat{h}_{n,m}+\Delta h_{n,m},
\end{equation}
where $\Delta h_{n,m}$ accounts for a bounded error \cite{hasan2015distributed}.

Moving on, the signal to interference and noise ratio (SINR) $\gamma_{n,m}$ of node $n$, when accessing channel $m$, can be described as:
\begin{equation}
\gamma_{n,m}=\frac{P_nh_{n,m}}{I_n+\sigma^2_m},\quad n\in\mathcal{N}, m\in\mathcal{M},
\end{equation}
where $\sigma^2_m$ is the variance of additive white Gaussian noise (AWGN) on the channel $m$, and the $I_n$ represents the interference, which is given by:
\begin{eqnarray}
I_n=\sum\limits_{i\in\mathcal{N}\setminus n|d_{n,i}<IR(\delta)}P_nh_{n,m}.
\end{eqnarray}

(1)\textbf{ Achievable Rate} Let $\bm{s}_n=\{s^1_n,s^2_n,...,s^{\alpha_n}_n\}\in\mathcal{S}_n$ denote the allocated channel set of node $n$, where $\alpha_n$ is the number of the allocated channel, and $\mathcal{S}_n$ is the feasible channels set of node $n$. According to Shannon's capacity formula, the achievable data rate of node $n$ is:
\begin{equation}
R_n(\bm{s}_n)=A_n\sum^{\alpha_n}_{i=1}r_n(s^i_n)=A_n\sum^{\alpha_n}_{i=1}B\log_2(1+\gamma_{n,s^i_n}),
\label{Rn}
\end{equation}
where $B$ is the bandwidth of each channel, and $r_n(s^i_n)$ is the data rate in channel $s^i_n$.

(2)\textbf{ Generalized Throughput} Considering the mesh architecture of UAV networks, the performances of node $n$ should not be assessed only by its data rate, but also the topology structure. Instead of the achievable data rate $R_n$, we introduce another more comprehensive indicator $T_n$ \cite{duarte2012partially} to reformat the throughput of node $n$, whereby the previously defined IF as well as network connectivity are also taken into consideration, i.e.
\begin{equation}
T_n(\bm{s}_n)=\beta_n\frac{\sum^{\alpha_n}_{i=1}\frac{r_n(s^i_n)}{IF_{s^i_n}+1}}{\kappa_n}, s^i_n\in\bm{s}_n, n\in\mathcal{N},
\label{Tn}
\end{equation}
where $\beta_n\in\{0,1\}$ denotes a connectivity factor, $IF_{s^i_n}$ is the IF when node $n$ occupying channel $s^i_n$, and $\kappa_n$ is the hop count form node $n$ to the destination receiver.

Denote the channel allocation pattern of all UAV nodes as $\bm{s}=\{\bm{s}_1,\bm{s}_2,...,\bm{s}_N\}$, with the above two performance metrics, i.e. the achievable rate $R_n$ and the generalized throughput $T_n$, then the global utility $U(\bm{s})$ of all UAVs can be given by:
\begin{equation}
U(\bm{s})=\left\{
\begin{aligned}
&\sum_{n\in\mathcal{N}}R_n(\bm{s}_n)=\sum_{n\in\mathcal{N}}A_n\sum^{\alpha_n}_{i=1}r_n(s^i_n),\\
&\sum_{n\in\mathcal{N}}T_n(\bm{s}_n)=\sum_{n\in\mathcal{N}}\frac{\beta_n}{\kappa_n}\sum^{\alpha_n}_{i=1}\frac{r_n(s^i_n)}{IF_{s^i_n}+1}
\end{aligned}
\right.
\end{equation}

The purpose of channel resources management is thereby to optimize the network utility $U(\bm{s})$ (According to the practical application to determine which one is preferred, the rate $R_n$ or the throughput $T_n$), by carefully allocating the POCs with the following constraints.
\begin{enumerate}[(1)]
 \item \textbf{Total power constraint}:
 \begin{equation}
\alpha_nP_n\leqslant P^n_{max},
 \end{equation}
 where $P^n_{max}$ denotes the maximum transmission power of node $n$.
 \item \textbf{QoS constraint}:
 \begin{equation}\label{Rth}
R_n(\bm{s}_n)\geqslant R^n_{th},
 \end{equation}
 where $R^n_{th}$ is the minimum transmit data rate for node $n$ to maintain QoS requirement of diverse applications.
 \item \textbf{Cluster size constraint}:
 \begin{equation}
|\mathcal{C}_i|\leqslant C_{th},
 \end{equation}
 where $C_{th}$ is the maximum number of UAV nodes that a cluster unit can accommodate.
 \item \textbf{Orthogonality Constraint}:
 \begin{equation}
\bm{s}_n\in\mathcal{M}_{OC}, 0\leqslant \alpha_n\leqslant O_M.
 \end{equation}
As we claimed before, the self interference would severely undermine the QoS. Therefore, multiple channels occupied by one UAV node should be orthogonal, obviously, whose number can't surpass $O_M$.
\end{enumerate}

Based on the above elaborations, the corresponding POCs allocation problem for the mesh-structured UAV networks can be mathematically formulated as:
\begin{align}
&\max U(\bm{s}),\\
\text{s.t.}\quad &C1:\alpha_nP_n\leqslant P^n_{max},\nonumber\\
&C2:R_n(\bm{s}_n)\geqslant R^n_{th},\nonumber\\
&C3:|\mathcal{C}_i|\leqslant C_{th},\nonumber\\
&C4:\bm{s}_n\in\mathcal{M}_{OC}, 0\leqslant \alpha_n\leqslant O_M.
\end{align}

Due to the NP-hard nature of the above problem, solving it in static networks with definite information is already challenging, not to mention taking the dynamic properties of UAV networks and additional complex constrains into considerations. To the best of our knowledge, a robust algorithm for POCs allocation, one that can efficient cope with the dynamics and uncertainties in the context of UAVs communication networks, has not been studied in the literature.

\section{Fuzzy Payoffs Game for POCs Allocation}
As shown by previous analysis and subsequent simulations, the intrinsic dynamics and uncertainties of the considered UAV networks would degenerate the performance of conventional crisp-game theoretical algorithms, for its sensitiveness to environmental variations. In this section, we will exploit fuzzy game theory \cite{zimmermann2011fuzzy} to reformulate the above optimization problem with uncertain information and multiple constraints. Specifically, we first summarize some basic definitions and notions of the conventional crisp-game theory and the fuzzy set theory. On this basis, we introduce the FPG concept to describe the POCs allocation problem in mesh UAV networks. Furthermore, the existence of the equilibrium solution for our established FPG is demonstrated.
\subsection{Game Theory}
\begin{myDef}[\textbf{Crisp Non-Cooperative Game}]
A crisp non-cooperative game is defined as $\mathcal{G}\triangleq\Big(\mathcal{N},\mathcal{S},\mathcal{F}(\bm{s})\Big)$, where:
\begin{itemize}
 \item $\mathcal{N}=\{1,2,...,N\}$ is the set of players (UAV nodes);
 \item $\mathcal{S}=\otimes\mathcal{S}_n$, $n\in\mathcal{N}$ is the set of strategy profiles of the game, where $\mathcal{S}_n$ is the set of strategies of the $n$th player;
 \item $\mathcal{F}=\{F_1,F_2,...,F_N\}$ is the set of payoff functions for the players.
\end{itemize}
\end{myDef}

Since the game $\mathcal{G}$ is non-cooperative, then only self-enforcing solutions can be reasonable and rational for it. The core concept of the non-cooperative game $\mathcal{G}$ is Nash equilibrium (NE), which is described as follows.
\begin{myDef}[\textbf{Nash Equilibrium}]
A strategy pattern $\bm{s}^*\in\mathcal{S}$ is called a NE of the game $\mathcal{G}$ if,
\begin{align}
F_n(\bm{s}^*_n,\bm{s}^*_{-n})\geqslant &F_n(\bm{s}_n,\bm{s}^*_{-n}),\nonumber\\
&\forall n\in\mathcal{N},~\forall \bm{s}_n,\bm{s}^*_n\in\mathcal{S}_n,~\bm{s}^*_{-n}\in\mathcal{S}_{-n},
\end{align}
where $\mathcal{S}_{-n}$ is the strategies sets of all players, except the $n$th player, and $\bm{s}^*_{-n}$ is an element of $\mathcal{S}_{-n}$.
\end{myDef}

\subsection{Fuzzy Set Theory}
\begin{myDef}[\textbf{Fuzzy Number}]
A real fuzzy number $\tilde{a}$ is precisely described as any fuzzy subset on the space of real numbers $\mathscr{R}$, whose membership function $\mu_{\tilde{a}}(x)$ satisfies the following conditions:
\begin{itemize}
 \item $\mu_{\tilde{a}}(x)$ is a continuous mapping from $\mathscr{R}$ to the closed interval $[0,1]$.
 \item $\mu_{\tilde{a}}(x)$ is constant on $[-\infty,a_1]\cup[a_4,+\infty]$ and $[a_2,a_3]$. Specifically, $\mu_{\tilde{a}}(x)=0$, $\forall x\in[-\infty,a_1]\cup[a_4,+\infty]$ and $\mu_{\tilde{a}}(x)=1$, $\forall x\in[a_2,a_3]$.
 \item $\mu_{\tilde{a}}(x)$ is strictly increasing and continuous over $[a_1,a_2]$, and strictly decreasing and continuous over $[a_3,a_4]$.
\end{itemize}

Here $a_1$, $a_2$, $a_3$ and $a_4$ are real numbers satisfying $a_1<a_2\leqslant a_3<a_4$.
\end{myDef}

The membership function $\mu_{\tilde{a}}(x)$ gives a quantitative description of the fuzzy number $\tilde{a}$, which is the basic concept of fuzzy mathematics. Here, we take the triangular fuzzy number (TFN) $\tilde{a}=(a,l,r)$ for example, whose membership functions $\mu_{\tilde{a}}(x)$ is given by:
\begin{eqnarray}
\mu_{\tilde{a}}(x)=\begin{cases}\displaystyle\frac{x-a+l}{l},& \qquad x\in[a-l,a],\\
                                \displaystyle\frac{a+r-x}{r},& \qquad x\in[a,a+r],\\
                                0,& \qquad \text{else},
                   \end{cases}
                   \label{TFN}
\end{eqnarray}
where $a$, $l$ and $r$ are all real numbers. An illustration of the membership function of TFN is shown in Fig. 3.
\begin{figure}[!t]           
 \centering
 \includegraphics[width=60mm]{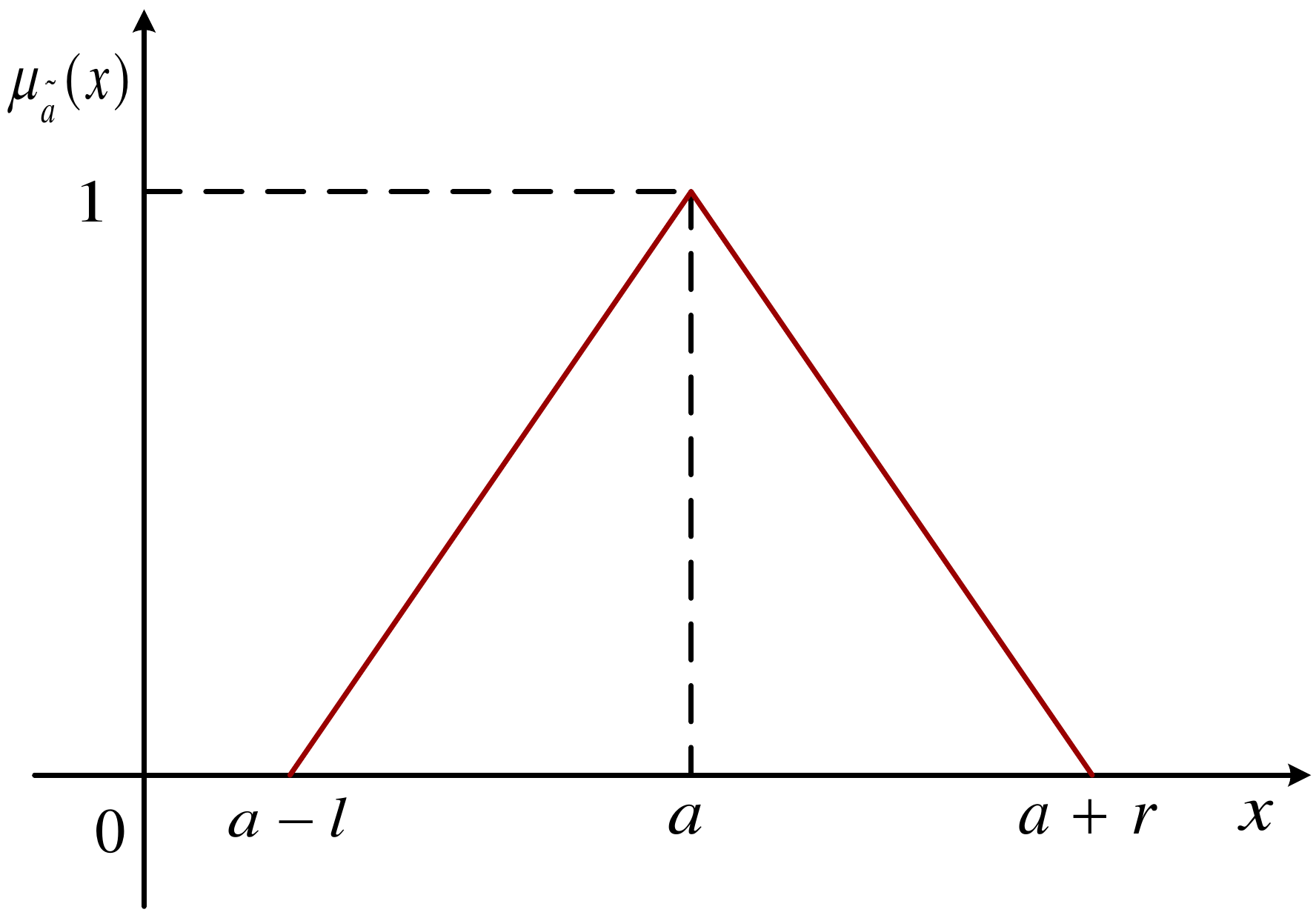}
 \caption{Membership function of $\tilde{a}$.}
\end{figure}

The operations of the fuzzy number obey the following lemma.
\begin{myLem}
{Let $\tilde{a}_1=(a_1,l_{a_1},r_{a_1})$, $\tilde{a}_2=(a_2,l_{a_2},r_{a_2})$ represent TFNs, $\nu$ is a real number. It holds that:
\begin{itemize}
 \item $\tilde{a}_1+\tilde{a}_2=(a_1+a_2,l_{a_1}+l_{a_2},r_{a_1}+r_{a_2})$;
 \item $\nu\tilde{a}_1=(\nu a_1,\nu l_{a_1},\nu r_{a_1})$;
 \item $\tilde{a}_2$ dominates $\tilde{a}_1$ (denoted by $\tilde{a}_2 \gtrapprox \tilde{a}_1$) if and only if $\max\{l_{a_2}-l_{a_1},0\}\leqslant a_2-a_1$ and $\max\{r_{a_1}-r_{a_2},0\}\leqslant a_2-a_1$.
\end{itemize}}
\end{myLem}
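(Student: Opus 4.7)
The plan is to verify each of the three identities by translating them into statements about $\alpha$-cuts of the triangular fuzzy numbers (TFNs), exploiting the piecewise-linear structure of the membership function in (\ref{TFN}). Inverting that formula, the $\alpha$-cut of $\tilde{a}=(a,l,r)$ is the closed interval $[\tilde{a}]_\alpha = [a-(1-\alpha)l,\; a+(1-\alpha)r]$ for every $\alpha\in(0,1]$, and since a TFN is uniquely determined by its family of $\alpha$-cuts, it suffices to check each claim at the level of intervals. This bypasses direct manipulation of the three-case membership function and keeps the argument linear.

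For addition, I would invoke Zadeh's extension principle, which implies that $[\tilde{a}_1+\tilde{a}_2]_\alpha = [\tilde{a}_1]_\alpha + [\tilde{a}_2]_\alpha$ in the Minkowski sense. Adding the two intervals gives $[(a_1+a_2)-(1-\alpha)(l_{a_1}+l_{a_2}),\,(a_1+a_2)+(1-\alpha)(r_{a_1}+r_{a_2})]$, which matches the $\alpha$-cut representation of the TFN $(a_1+a_2,l_{a_1}+l_{a_2},r_{a_1}+r_{a_2})$. Scalar multiplication is even simpler: from $\mu_{\nu\tilde{a}_1}(x)=\mu_{\tilde{a}_1}(x/\nu)$ for $\nu>0$ one obtains $\nu[\tilde{a}_1]_\alpha = [\nu a_1-\nu(1-\alpha)l_{a_1},\,\nu a_1+\nu(1-\alpha)r_{a_1}]$, i.e.\ exactly the $\alpha$-cut of $(\nu a_1,\nu l_{a_1},\nu r_{a_1})$. (The $\nu<0$ case requires swapping the left and right spreads, so I would either restrict to $\nu\geqslant 0$ or note the mirror case separately.)

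For the dominance claim, I would adopt the standard interpretation that $\tilde{a}_2\gtrapprox\tilde{a}_1$ means both endpoints of $[\tilde{a}_2]_\alpha$ lie to the right of the corresponding endpoints of $[\tilde{a}_1]_\alpha$ for every $\alpha\in[0,1]$, i.e.\ $a_2-(1-\alpha)l_{a_2}\geqslant a_1-(1-\alpha)l_{a_1}$ and $a_2+(1-\alpha)r_{a_2}\geqslant a_1+(1-\alpha)r_{a_1}$. Rearranging, the first reads $(a_2-a_1)\geqslant (1-\alpha)(l_{a_2}-l_{a_1})$ for all $\alpha\in[0,1]$; since $(1-\alpha)$ ranges over $[0,1]$, this conjunction collapses to the worst-case bound $\max\{l_{a_2}-l_{a_1},0\}\leqslant a_2-a_1$. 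A symmetric rearrangement of the right-endpoint inequality yields $\max\{r_{a_1}-r_{a_2},0\}\leqslant a_2-a_1$, completing the equivalence.

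The main obstacle is not computational but interpretive: the paper denotes dominance by $\gtrapprox$ without giving it an independent definition, so the first move in the proof must be to commit to the $\alpha$-cut ordering as the semantic content of dominance. Once that commitment is made, the remainder is a routine collapse of a one-parameter family of linear inequalities into its envelope, and no compactness or continuity subtleties arise because every membership function in sight is piecewise linear.
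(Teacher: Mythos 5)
Your proposal is correct, but there is nothing in the paper to compare it against step by step: Lemma~1 is stated without proof, as a quoted standard property of triangular fuzzy numbers (fuzzy arithmetic under Zadeh's extension principle and the usual fuzzy-max dominance order). Your $\alpha$-cut route is the standard way to establish exactly these facts, and it is sound: inverting the membership function in the paper's Definition~3 indeed gives $[\tilde{a}]_\alpha=[a-(1-\alpha)l,\,a+(1-\alpha)r]$, the Minkowski-sum property of $\alpha$-cuts yields the addition rule, positive scalar multiplication follows by rescaling cuts, and your collapse of the one-parameter family of endpoint inequalities over $(1-\alpha)\in[0,1]$ into the envelope conditions $\max\{l_{a_2}-l_{a_1},0\}\leqslant a_2-a_1$ and $\max\{r_{a_1}-r_{a_2},0\}\leqslant a_2-a_1$ is exactly the equivalence asserted in the third bullet. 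The two caveats you flag are the right ones and are genuine imprecisions of the lemma as stated rather than gaps in your argument: the scalar rule as written only holds for $\nu\geqslant 0$ (for $\nu<0$ the spreads swap, and in the paper's use of the lemma the scalars are nonnegative), and the symbol $\gtrapprox$ is never independently defined, so committing to the componentwise $\alpha$-cut ordering (the Ram\'{\i}k--\v{R}\'{\i}m\'{a}nek / fuzzy-max order) is the only reading under which the ``if and only if'' is meaningful; that reading is consistent with how dominance is used in the paper's Definition~10 of the fuzzy Nash equilibrium. What your approach buys over simply citing the literature, as the paper does, is a self-contained linear-algebraic verification that never touches the three-case membership function directly.
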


Since fuzzy numbers represent ambiguous numeric values, it is difficult to rank them according to their magnitude. Various methods of fuzzy numbers ranking have been developed. In \cite{lee1999method}, the method of evaluating fuzzy numbers with the satisfaction function (SF) and the viewpoint, and then ranking the numbers on the basis of their relative indexes of the evaluation values is introduced. The definitions of the SF, the viewpoint, the evaluation value, and the relative index are presented as follows.
\begin{myDef}[\textbf{Satisfaction Function}]
The SF between two fuzzy number $\tilde{a}$ and $\tilde{b}$ is defined as:
\begin{equation}
SF(\tilde{a}<\tilde{b})\triangleq\frac{\displaystyle\int^{+\infty}_{-\infty}\displaystyle\int^{y}_{-\infty}\mu_{\tilde{a}}(x)\odot\mu_{\tilde{b}}(y)dxdy}{\displaystyle\int^{+\infty}_{-\infty}\displaystyle\int^{+\infty}_{-\infty}\mu_{\tilde{a}}(x)\odot\mu_{\tilde{b}}(y)dxdy},\tag{18a}
\end{equation}
\begin{equation}
SF(\tilde{a}>\tilde{b})\triangleq\frac{\displaystyle\int^{+\infty}_{-\infty}\displaystyle\int^{+\infty}_{y}\mu_{\tilde{a}}(x)\odot\mu_{\tilde{b}}(y)dxdy}{\displaystyle\int^{+\infty}_{-\infty}\displaystyle\int^{+\infty}_{-\infty}\mu_{\tilde{a}}(x)\odot\mu_{\tilde{b}}(y)dxdy},\tag{18b}
\end{equation}
where the operator $\odot$ is a T-norm, without losing of generality, here we employ the commonly used multiplication operator. $SF(\tilde{a}<\tilde{b})$ represents the possibility that fuzzy number $\tilde{a}$ is smaller than $\tilde{b}$. Similarly, $SF(\tilde{a}>\tilde{b})$ represents the possibility that $\tilde{a}$ is larger than $\tilde{b}$.
\end{myDef}

\begin{myDef}[\textbf{Viewpoint}]
For a fuzzy numbers $\tilde{a}$, a fuzzy set $\tilde{b}$ which satisfies the following conditions is a viewpoint:
\begin{itemize}
 \item $\sup(\tilde{a})\subseteq\sup(\tilde{b})$, where $\sup(\tilde{a})=\{x|\mu_{\tilde{a}}(x)\neq0\}$;
 \item $\displaystyle\int^{+\infty}_{-\infty}\mu_{\tilde{b}}(x)\mathrm{d}x$ exists and it is not zero.
\end{itemize}

The fuzzy set $\tilde{b}$ is a viewpoint, which is used for evaluating the fuzzy numbers and can be broadly divided into three categories: \emph{optimistic} \emph{neutral} and \emph{pessimistic}. The second condition is added so that a viewpoint can be applicable to the SF.
\end{myDef}
\begin{myDef}[\textbf{Evaluation Value}]
On the basis of the interpretation of the SF, the evaluation value of fuzzy number $\tilde{a}$ in a viewpoint $\tilde{b}$, $E_{\tilde{b}}(\tilde{a})$ is given by:
\setcounter{equation}{18}
\begin{equation}
E_{\tilde{b}}(\tilde{a})=SF(\tilde{a}>\tilde{b})
\label{E}.
\end{equation}
\end{myDef}
\begin{myDef}[\textbf{Relative Index}]
The relative index of the fuzzy number $\tilde{a}$ in the viewpoint $\tilde{b}$, $V_{\tilde{b}}(\tilde{a})$, which shows how close $\tilde{a}$ is to the one having the best evaluation in viewpoint $\tilde{b}$, is defined as:
\begin{equation}
V_{\tilde{b}}(\tilde{a})=\frac{E_{\tilde{b}}(\tilde{a})}{\max_{\tilde{a}\in\tilde{\mathcal{A}}}E_{\tilde{b}}(\tilde{a})},
\label{I}
\end{equation}
where $\tilde{\mathcal{A}}$ is the set of fuzzy numbers.
\end{myDef}

\subsection{Fuzzy Payoffs Game}
In order to deal with the POCs allocation problem with uncertain information in dynamic UAV communication networks, we map the channel assignment problem into a fuzzy-logic space rather than an observational space. Then, we employ the TFN $\tilde{H}_{n,m}=(\hat{h}_{n,m},\Delta h^l_{n,m},\Delta h^r_{n,m})$ to describe the uncertain channel gains. Without losing of generality, we presume the left deviation $\Delta h^l_{n,m}$ and the right deviation $\Delta h^r_{n,m}$ of the TFN $\tilde{H}_{n,m}$ are equal, i.e. $\Delta h^l_{n,m}=\Delta h^r_{n,m}=\Delta h_{n,m}$. Recall that $\Delta h_{n,m}$ gives the bounded estimation error of channel gains. For clarity, we denote
\begin{equation}
\tilde{H}_{n,m}=(\hat{h}_{n,m},\Delta h_{n,m}).
\end{equation}

Expanding each component of the crisp game $\mathcal{G}$ to a fuzzy set would lead to a fuzzy game. In this paper, we assume the players set $\mathcal{N}$ and the strategy profiles $\mathcal{S}$ are definite, whereas the payoff of each player which influenced by the TFN $\tilde{H}_{n,m}$ is a fuzzy number.
\begin{myDef}[\textbf{Fuzzy Payoff Function}]
The FPF of each UAV node is defined as the uncertain achievable rate $R_n(\bm s_n, \tilde{H}_{n,m})$ or generalized throughput $T_n(\bm s_n, \tilde{H}_{n,m})$ of node $n$, which would become fuzzy numbers due to the fuzzy space projecting, i.e.,
\begin{equation}
\tilde{F}_n(\bm s_n,\bm s_{-n},\tilde{H}_{n,m})\triangleq\left\{
\begin{aligned}
&R_n(\bm s_n, \tilde{H}_{n,m}),\\
&T_n(\bm s_n, \tilde{H}_{n,m}).
\end{aligned}
\right.
\label{F}
\end{equation}

In this regards, the utility of each node is not only affected by the action taken of all nodes, but also by the dynamic and uncertain environments.
\end{myDef}

With the formulated FPF, we now present a FPG to characterize the problem of POCs allocation in UAV communication networks with indefinite CSI.
\begin{myDef}[\textbf{Fuzzy Payoffs Game}]
The FPG is defined as:
\begin{equation}
\label{G}
\tilde{\mathcal{G}}\triangleq\Big(\mathcal{N},\mathcal{S},\tilde{\mathcal{F}}(\bm{s},\bm{\tilde{H}})\Big),
\end{equation}
where $\mathcal{N}$ and $\mathcal{S}$ are identical with that in the crisp game $\mathcal{G}$. $\tilde{\mathcal{F}}(\bm{s},\bm{\tilde{H}})=\{\tilde{F}_n(\bm{s}, \tilde{H}_{n,m})|n\in\mathcal{N},m\in\mathcal{M}\}$ is the FPF set, and $\bm{\tilde{H}}=(\tilde{H}_{n,m}|n\in\mathcal{N},m\in\mathcal{M})$ is the vector of uncertain channel gains modeled by fuzzy number.
\end{myDef}

Based on the above analysis, the problem on eq. (14) constrained by eq. (15) can be reformulated as a FPG, in which the players attempt to find an appropriate channel selection pattern to maximize their fuzzy payoffs, i.e.,
\begin{align}
\bm s^*=\arg&\max_{\bm s_n\in\mathcal{S}_n}\tilde{F}_n(\bm s_n,\bm s_{-n},\tilde{H}_{n,m}),~\forall n\in\mathcal{N},\\
\text{s.t.}\quad &C1:\alpha_nP_n\leqslant P^n_{max},\nonumber\\
&C2:R_n(\bm{s}_n)\geqslant R^n_{th},\nonumber\\
&C3:|\mathcal{C}_i|\leqslant C_{th},\nonumber\\
&C4:\bm{s}_n\in\mathcal{M}_{OC}, 0\leqslant \alpha_n\leqslant O_M.
\end{align}

The property of the above designed FPG is investigated in the following subsection.

\subsection{Analysis of Fuzzy Nash Equilibrium}
As with the crisp game, the fuzzy game has also a NE concept, which is referred to as FNE \cite{chakeri2013fuzzy}. The definition of FNE is presented as follows.

\begin{myDef}[\textbf{Fuzzy Nash Equilibrium}]
A strategy pattern $\bm s^*\in\mathcal{S}$ is called a FNE of the fuzzy game $\tilde{\mathcal{G}}$ if,
\begin{align}
&\tilde{F}_n(\bm s^*_n,\bm s^*_{-n},\tilde{H}_{n,m})\gtrapprox\tilde{F}_n(\bm s_n,\bm s^*_{-n},\tilde{H}_{n,m}),\nonumber\\
&\forall n\in\mathcal{N},~\forall \bm s_n,\bm s^*_n\in\mathcal{S}_n,~\bm s^*_{-n}\in\mathcal{S}_{-n}.
\end{align}
\end{myDef}

\begin{myThe}
{\it There exists a FNE solution for the formulated FPG in eq. (\ref{G}).}
\end{myThe}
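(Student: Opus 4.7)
The plan is to reduce the fuzzy game $\tilde{\mathcal{G}}$ to a crisp game on the mixed extension of the (finite) strategy space and then to invoke a Nash/Kakutani-style fixed-point argument, translating back to the fuzzy setting through the dominance relation of Lemma 1. As a starting point, I note that for every player $n$ the set $\mathcal{S}_n$ is finite: by constraint $C4$ it is a collection of subsets of $\mathcal{M}_{OC}$ with $|\bm{s}_n|\leqslant O_M$. Passing to mixed strategies, $\Delta(\mathcal{S}_n)$ is compact and convex, and the product $\prod_{n\in\mathcal{N}} \Delta(\mathcal{S}_n)$ is a compact convex subset of a Euclidean space, which provides the natural domain for the fixed-point argument.

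The next step is to show that the expected fuzzy payoff remains a triangular fuzzy number and depends continuously on the profile. Using the arithmetic rules of Lemma 1 (addition and scalar multiplication of TFNs act componentwise on the triple), the payoff $\tilde{F}_n(\bm{s},\tilde{H}_{n,m})$ at a mixed profile is still a TFN whose center $\bar{F}_n$ and spread $\Delta F_n$ are multilinear in the mixing weights. Consequently both $\bar{F}_n$ and $\Delta F_n$ are continuous, and the dominance condition $\tilde{F}_n(\bm{s}^*_n,\bm{s}^*_{-n},\tilde{H}_{n,m})\gtrapprox \tilde{F}_n(\bm{s}_n,\bm{s}^*_{-n},\tilde{H}_{n,m})$ of Definition 10 reduces, via Lemma 1, to a finite system of scalar inequalities linking the centers and the half-widths.

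To complete the proof I would construct a crisp auxiliary game $\mathcal{G}'$ whose payoff function is the evaluation value $E_{\tilde{b}}(\tilde{F}_n)$ (equivalently, the relative index $V_{\tilde{b}}(\tilde{F}_n)$ of Definitions 6 and 7). With a suitable choice of viewpoint $\tilde{b}$ the evaluation value is monotone with respect to $\gtrapprox$, continuous on the mixed-strategy simplex, and quasi-concave in each player's own mixing weights (the last property follows from the multilinearity established above). Under these conditions the best-response correspondence of $\mathcal{G}'$ is nonempty, convex-valued, and upper hemicontinuous, so Kakutani's fixed-point theorem delivers a mixed-strategy Nash equilibrium $\bm{s}^*$; monotonicity of $E_{\tilde{b}}$ then lifts $\bm{s}^*$ to an FNE of $\tilde{\mathcal{G}}$ in the sense of Definition 10.

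The main obstacle I anticipate is the monotonicity step. Because $\gtrapprox$ is only a partial order on TFNs, the reduction from a fuzzy payoff to a single real-valued surrogate is not automatic, and a careless choice of $\tilde{b}$ could produce a maximizer of $E_{\tilde{b}}$ that fails the two half-width inequalities of Lemma 1. Overcoming this will require exploiting the symmetric-spread assumption $\Delta h^l_{n,m}=\Delta h^r_{n,m}=\Delta h_{n,m}$ built into Section III--C, which collapses the two inequalities into a single one, together with the choice of a symmetric viewpoint supported on $\sup(\tilde{F}_n)$. This is precisely the ``attractive property of fuzzy bi-matrix games'' that the introduction foreshadows; once the monotonicity link is secured, the rest of the argument is the standard Nash/Kakutani existence proof applied to $\mathcal{G}'$.
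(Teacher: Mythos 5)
Your route is genuinely different from the paper's: the paper never scalarizes the fuzzy payoffs at all, but instead invokes a cited existence result for fuzzy bi-matrix games (its Lemma~2, which asks for convexity of a partial sum of fuzzy payoffs with respect to the membership function of the uncertain gains) and then passes from two players to $N+1$ players by mathematical induction, folding the first $N$ players into a single composite Player~II. Your plan---scalarize via the evaluation value $E_{\tilde b}$, run Kakutani on the mixed extension of the finite strategy sets, and lift the resulting equilibrium back to the fuzzy game---would be a legitimate alternative if the lifting step worked, but as written that step has a genuine gap.

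The gap is that monotonicity of $E_{\tilde b}$ with respect to $\gtrapprox$ only runs in the direction you do not need. If $\tilde a \gtrapprox \tilde a'$ implies $E_{\tilde b}(\tilde a)\geqslant E_{\tilde b}(\tilde a')$, then a Nash equilibrium of the auxiliary game $\mathcal{G}'$ only yields $E_{\tilde b}\bigl(\tilde F_n(\bm s^*_n,\bm s^*_{-n},\tilde H_{n,m})\bigr)\geqslant E_{\tilde b}\bigl(\tilde F_n(\bm s_n,\bm s^*_{-n},\tilde H_{n,m})\bigr)$, whereas Definition~10 demands the partial-order dominance $\tilde F_n(\bm s^*_n,\bm s^*_{-n},\tilde H_{n,m})\gtrapprox\tilde F_n(\bm s_n,\bm s^*_{-n},\tilde H_{n,m})$; a scalar surrogate collapses the partial order into a total preorder and cannot recover it, since two payoffs can satisfy the evaluation inequality while violating one of the two half-width conditions in Lemma~1 and hence be $\gtrapprox$-incomparable. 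The symmetric-spread assumption does not repair this: $\Delta h^l_{n,m}=\Delta h^r_{n,m}$ symmetrizes the channel-gain TFN, but the payoff is a nonlinear (logarithmic) function of the gain and, more importantly, different strategies carry different spreads $\Delta h_{n,m}$, so the center inequality delivered by the surrogate does not imply $\max\{l_{a_2}-l_{a_1},0\}\leqslant a_2-a_1$ and $\max\{r_{a_1}-r_{a_2},0\}\leqslant a_2-a_1$. Unless you prove the converse implication for your specific viewpoint and payoff structure (essentially showing the spreads are dominated by the gaps between centers), or weaken the solution concept to an $E_{\tilde b}$-equilibrium, Kakutani gives you an equilibrium of $\mathcal{G}'$, not an FNE in the paper's sense. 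Two secondary points: quasi-concavity of $E_{\tilde b}$ in your own mixing weights does not follow from multilinearity of the TFN parameters, because $E_{\tilde b}$ is a ratio of integrals of those parameters; and Definition~10 is stated for pure profiles $\bm s^*\in\mathcal{S}$, so the mixed-strategy output of the fixed-point argument still has to be reconciled with the definition you are trying to satisfy.
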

\begin{proof}
In order to demonstrate the existence of FNE, we first present the definition and the property of fuzzy bi-matrix games.

\begin{myDef}[\textbf{Fuzzy Bi-matrix Game}]
A fuzzy bi-matrix game $\tilde{\mathcal{G}}_B$ is defined as a bi-matrix game, which involves two players with fuzzy payoffs \cite{cunlin2011nash}, i.e.,
\begin{equation}
\tilde{\mathcal{G}}_B=\Big(\text{I},\text{II},\mathcal{S}_{\text{I}},\mathcal{S}_{\text{II}},\tilde{\textbf{F}}_{\text{I}},\tilde{\textbf{F}}_{\text{II}}\Big),
\end{equation}
where $\mathcal{S}_{\text{I}}$ and $\mathcal{S}_{\text{II}}$ are the sets of the strategies of Player \text{I} and Player \text{II}, respectively.
\begin{equation}       
\setlength{\arraycolsep}{1.5pt}
\tilde{\textbf{F}}_{n}=
\left[                 
  \begin{array}{cccc}   
    \tilde{F}_{11} & \tilde{F}_{12} & \cdots & \tilde{F}_{1M}\\  
    \tilde{F}_{21} & \tilde{F}_{22} & \cdots & \tilde{F}_{2M}\\  
    \vdots & \vdots & \vdots & \vdots\\
    \tilde{F}_{M1} & \tilde{F}_{M2} & \cdots & \tilde{F}_{MM}\\  
  \end{array}
\right]_{M\times M} n=\text{I}, \text{II},               
\end{equation}
is the payoffs matrix of the players. Each element $\tilde{F}_{m,m'}$ specifies the attained fuzzy payoffs, when Player \text{I} adopts the strategy $m$ while Player \text{II} adopts the strategy $m'$.
\end{myDef}

One key property of fuzzy bi-matrix games is characterized by the following lemma.
\begin{myLem}
{A fuzzy bi-matrix game has at least one FNE solution, if there exists a subset $\mathcal{N}_0 \subset\mathcal{N}$ such that the function $\sum_{n\in\mathcal{N}_0}\tilde{F}_n(\bm s_n,\bm s_{-n},\tilde{H}_{n,m})$ is convex on $\mu_{\tilde{H}_{n,m}}(x)$ \cite{kacher2008existence}.}
\end{myLem}

Based on the favorable feature of fuzzy bi-matrix games, the mathematical induction (MI) is employed to analyze our formulated FPG, with which the existence of FNE can be guaranteed.

First, we consider the situation that the FPG consists of two players and present the following theorem.
\begin{myThe}
{\it There exists at least one FNE solution for the FPG $\tilde{\mathcal{G}}_2$ with two players.}
\end{myThe}
\begin{proof}
For the first condition in \textbf{Lemma 2}, intuitively,
\begin{equation}
\tilde{\mathcal{G}}_2=(1,2,\mathcal{S}_1,\mathcal{S}_2,\tilde{\textbf{F}}_1,\tilde{\textbf{F}}_2),
\end{equation}
is a fuzzy bi-matrix game.

For the second condition, here, we choose $\mathcal{N}_0=\{1\}$, then we have
\begin{equation}
\sum\nolimits_{n\in\mathcal{N}_0}\tilde{F}_n(\bm s_n,\bm s_{-n},\tilde{H}_{n,m})=\tilde{F}_1(\bm s_1,\bm s_2,\tilde{H}_{1,m}).
\label{sts}
\end{equation}
To distinguish the concave-convex quality of function $\tilde{F}_1(\bm s_1,\bm s_2,\tilde{H}_{1,m})$, we calculate its second derivative $\tilde{F}^{''}_1$, i.e.
\begin{equation}
\tilde{F}^{''}_1=\frac{\mathrm{d}^2\tilde{F}_1(\bm s_1,\bm s_2,\tilde{H}_{1,m})}{\mathrm{d}\tilde{H}^2_{1,m}}=-\chi\sum^{\alpha_1}_{i=1}\frac{\psi}{\tilde{H}_{1,m}},
\end{equation}
where $\chi$ and $\psi$ are constants. According to the definition of FPF in eq. (\ref{F}), we discuss the following two cases.
\begin{enumerate}[(1)]
 \item When $\tilde{F}_1(\bm s_1,\bm s_2,\tilde{H}_{1,m})=R_1(\bm s_1,\tilde{H}_{1,m})$, we have
 \begin{equation}
 \left\{
 \begin{aligned}
 &\chi=A_1\geqslant 0,\\
 &\psi=\frac{B}{ln2}>0,
 \end{aligned}
 \right.\Rightarrow\tilde{F}^{''}_1\leqslant 0.
 \end{equation}
 \item When $\tilde{F}_1(\bm s_1,\bm s_2,\tilde{H}_{1,m})=T_1(\bm s_1,\tilde{H}_{1,m})$, we have
 \begin{equation}
 \left\{
 \begin{aligned}
 &\chi=\frac{\beta_1}{\kappa_1}\geqslant 0,\\
 &\psi=\frac{B}{\text{ln}2 \times (IF_{s^i_1}+1)}>0,
 \end{aligned}
 \right.\Rightarrow\tilde{F}^{''}_1\leqslant 0.
 \label{ste}
 \end{equation}
\end{enumerate}
In the above two cases, the function $\sum\limits_{n\in\mathcal{N}_0}\tilde{F}_n(\bm s_n,\bm s_{-n},\tilde{H}_{n,m})$ is convex on $\mu_{\tilde{H}_{n,m}}(x)$.

Based on \textbf{Lemma 2} and the above elaborations, \textbf{Theorem 2} can be proved.
\end{proof}

Moving on, we execute the second step of the MI and provide the following theorem.
\begin{myThe}
{\it Assuming that the FPG $\tilde{\mathcal{G}}_N$ with $N$ players has FNE solutions, then there exists at least one FNE solution for the FPG $\tilde{\mathcal{G}}_N+1$ with $N+1$ players.}
\end{myThe}
\begin{proof}
Denote the $N+1$th player as $n_0$, the FPG $\tilde{\mathcal{G}}_N+1$ can be expressed as:
\begin{equation}
\tilde{\mathcal{G}}_N+1=\Big(n_0, \mathcal{N},\mathcal{S}_{n_0},\mathcal{S}_{\mathcal{N}},\tilde{\textbf{F}}_{n_0},\tilde{\textbf{F}}_{\mathcal{N}}\Big).
\end{equation}

Due to the existence of FNE of the FPG $\tilde{\mathcal{G}}_N$, the players set $\mathcal{N}$ can be regarded as a whole unity. Therefore, the FPG $\tilde{\mathcal{G}}_N+1$ apparently turns into a fuzzy bi-matrix game, in which $n_0$ is Player I, and $\mathcal{N}$ is Player II.

After establishing the FPG $\tilde{\mathcal{G}}_N+1$ as a fuzzy bi-matrix game, we analyze the concave-convex property of the payoff functions $\tilde{F}_{n_0}$ and $\tilde{F}_{\mathcal{N}}$.

For the payoff function $\tilde{F}_{n_0}(\bm s_{n_0},\bm s_{\mathcal{N}},\tilde{H}_{n_0,m})$ of Player I, let $\mathcal{N}_0=\{n_0\}$, by performing the same steps (from eq. (\ref{sts}) to eq. (\ref{ste})) in \textbf{Theorem 2}, it can be proved convex on $\mu_{\tilde{H}_{n_0,m}}(x)$.

The payoff function $\tilde{F}_{\mathcal{N}}$ of Player II is the sum payoffs of all player $n$, $n\in\mathcal{N}$, i.e.
\begin{equation}
\tilde{F}_{\mathcal{N}}=\sum\nolimits_{n\in\mathcal{N}}\tilde{F}_n(\bm s_n,\bm s_{-n},\tilde{H}_{n,m}).
\end{equation}
Obviously, $\tilde{F}_{\mathcal{N}}$ is a convex function on $\mu_{\bm{\tilde{H}}}(x)$, since $\tilde{F}_n(\bm s_n,\bm s_{-n},\tilde{H}_{n,m})$ is convex on $\mu_{\tilde{H}_{n,m}}(x)$, $n\in\mathcal{N}$. Recall that $\bm{\tilde{H}}$ denotes the vector of uncertain channel gains.

On the basis of \textbf{Lemma 2} and the above analysis, \textbf{Theorem 3} can be proved.
\end{proof}
Combining \textbf{Theorem 2} and \textbf{Theorem 3}, \textbf{Theorem 1} can be proved.
\end{proof}

After we have demonstrated the existence of FNE for our formulated FPG, what remains to solve is how to achieve the equilibrium solution. It should be pointed out that the procedure of identifying FNE of fuzzy games is far more complex than finding NE in crisp games, which, for example, involves the fuzzy number ranking, and would be influenced by the membership function of fuzzy number as well as the viewpoint of players, rendering most existing learning methods invalid. Thus, we need to design a new learning algorithm to cope with the fuzzy parameters, with which the robust accessing and optimal allocation can be implemented in dynamic UAVs environments.

\section{Global Optimization in Dynamic UAV Communication Networks}
In order to solve the concerned FPG, in this section we will introduce a robust fuzzy-learning algorithm for distributed POCs allocation, and then demonstrate its convergence property.

\begin{figure}[!t]           
 \centering
 \includegraphics[width=88mm]{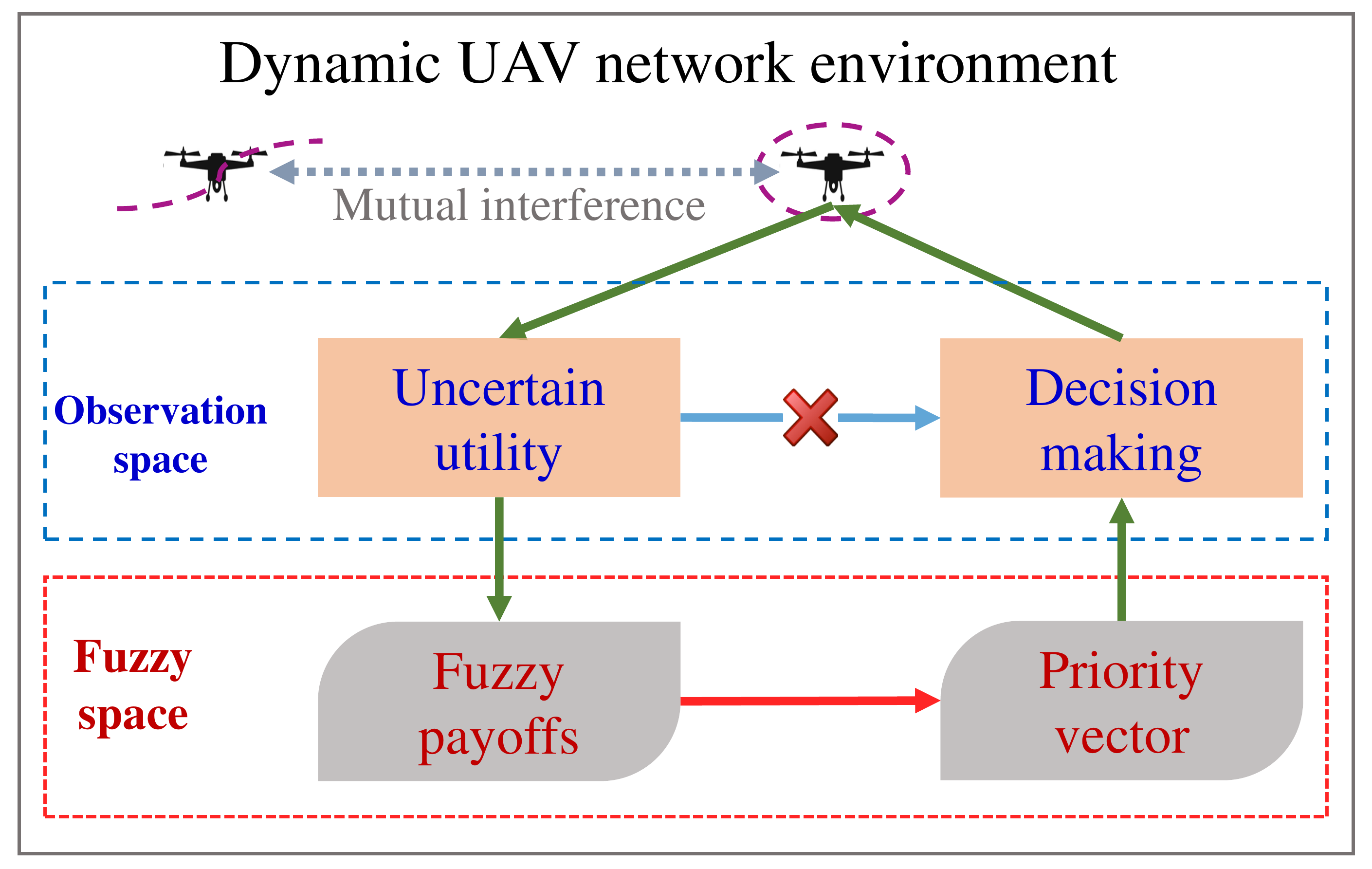}
 \caption{The framework of the fuzzy-learning algorithm}
\end{figure}
\subsection{Fuzzy-Learning Algorithm}
In UAV communication networks, UAV nodes would experience dynamic environments and suffer from varying coupling interference in most cases, i.e. their utilities become fluctuated and uncertain. Existing crisp-game theoretical learning approaches, depending on the accurate rewards (e.g. the channel capacity or SINR) to make decisions and update strategies in an observational space, are vulnerable for the encountered dynamics, hence the convergence can be hardly guaranteed.

Instead of an observational space, we implement the learning and updating in a mapped fuzzy space, whereby the priority vector of actions can be acquired by resorting to the fuzzy payoffs rather than the real-value crisp payoffs.
A remarkable advantage of our developed learning algorithm is that, by introducing a mapped fuzzy-space and the fuzzy-domain learning, it allows for a desensitization of fluctuated utility and thereby is capable of combating environmental changes and ensuring robust access.
A conceptional algorithm flow is shown in Fig. 4.

Based on the above elaborations, our proposed robust fuzzy-learning algorithm for POCs assignment in dynamic UAV networks is summarized in \textbf{Algorithm 1}. With the fuzzy-space interpolation and processing, a general action-taken strategy is adopted as follows:
\begin{equation} \label{sn}
\left\{
\begin{aligned}
&s^1_n(k+1)=\arg\max\bm{w}_n,\\
&s^i_n(k+1)\in\mathcal{M}_{OC}\Big(s^1_n(k+1)\Big),i=2,...,\alpha_n,\\
&\bm s_n(k+1)=\Big\{s^1_n(k+1),...,s^{\alpha_n}_n(k+1)\Big\}.
\end{aligned}
\right.
\end{equation}

It is worth to highlight that the updating rules contain two aspects, of which the first part corresponds to the best response with a fuzzy logic, while the second part is added to fulfill the QoS requirement (e.g. the orthogonality constraints).

As mentioned, we assume the channel gain $h_{n,m}(k)$ varies fast between two adjacent slots, and the strategy updating at $k$ slot is based on the current uncertain utility. The termination criterion of \textbf{Algorithm 1} is that, the difference of UAV node $n$'s utility between two adjacent iteration slots is less than a predefined threshold $\Delta R$ (data rate criterion) or $\Delta T$ (throughput criterion), i.e.,
\begin{equation}
\label{cri}
\left\{
\begin{aligned}
&R_n\Big(\bm s_n(k+1)\Big)-R_n\Big(\bm s_n(k)\Big)<\Delta R, \quad n\in\mathcal{N},\\
&T_n\Big(\bm s_n(k+1)\Big)-T_n\Big(\bm s_n(k)\Big)<\Delta T, \quad n\in\mathcal{N}.
\end{aligned}
\right.
\end{equation}

\begin{algorithm}[!t]
\caption{The Fuzzy-Learning Algorithm}
\label{The FGDA Algorithm}
\begin{algorithmic}
\STATE \textbf{Input}: Each UAV node $n$, $n\in\mathcal{N}$.
\STATE \textbf{Initialization}: The random channel selection pattern $\bm s_n$ of UAV node $n$.
\STATE \emph{\textbf{Step 1}}: At iteration slot $k$, each node $n$ obtains the fluctuated achievable rate (the generalized throughput) of different channel according to eq. (\ref{Rn}) (eq. (\ref{Tn})).
\STATE \emph{\textbf{Step 2}}: The node $n$ employs fuzzy space mapping to combat the dynamics of UAV networks and obtains the priority vector $\bm{w}_n$ of all channels.
\STATE \emph{\textbf{Step 3}}: Evaluate the actions according to $\bm{w}_n$, and update the selection strategy $\bm s_n$ in line with eq. (\ref{sn}).
\STATE \emph{\textbf{Step 4}}: Calculate the data rate of node $n$ according to eq. (\ref{Rn}), and compare it with the QoS constraint $R^n_{th}$. If the condition, i.e. eq. (\ref{Rth}) is satisfied, output the current selection strategy; otherwise, continue \emph{\textbf{Step 5}}.
\STATE \emph{\textbf{Step 5}}: Choose another channel among the set $\mathcal{M}_{OC}\Big(s^1_n(k+1)\Big)$. Repeat the \emph{\textbf{Step 4}}, until the QoS is implemented or the available channel is empty.
\STATE \emph{\textbf{Step 6}}: If the eq. (\ref{cri}) holds, stop; otherwise, go to \emph{\textbf{Step 1}}.
\STATE \textbf{Output}: The optimal channel selection pattern $\bm s^*_n$ of UAV node $n$.
\end{algorithmic}
\end{algorithm}

\subsection{Priority Vector}
From eq. (34) and the proposed \textbf{Algorithm 1}, it is noted that the priority vector $\bm{w}_n=\{w^m_n|n\in\mathcal{N}, m\in\mathcal{M}\}$ is the cornerstone of fuzzy-space learning for UAV node $n$, as far as the main purpose of POC channel assignment is concerned, which accounts for the important degree of channels with a viewpoint $\tilde{v}_n$ of node $n$ and also satisfies the normalizing condition $\sum_{m\in\mathcal{M}}w^m_n=1$, $w^m_n\geqslant0$. In the following, we will present a least deviation algorithm \cite{xu2005least}, in order to quantify the fuzzy payoffs $\tilde{F}_n(\bm s_n,\bm s_{-n},\tilde{H}_{n,m})$ and finally calculate the priority vector $\bm{w}_n$ via fuzzy logic.
\begin{algorithm}[!t]
\caption{The Least Deviation Algorithm}
\label{The Least Deviation Algorithm}
\begin{algorithmic}
 \STATE \emph{\textbf{Step 1}}: Rank the fuzzy number payoffs.
 \begin{itemize}
 \item Define a fuzzy number $\tilde{v}_n$ as the viewpoint of node $n$;
 \item Evaluate $E_{\tilde{v}_n}\Big[\tilde{F_n}(\bm s_n,\bm s_{-n},\tilde{H}_{n,m})\Big]$ for all fuzzy number payoffs using eq. (\ref{E});
 \item Calculate the relative index $V_{\tilde{v}_n}\Big[\tilde{F_n}(\bm s_n,\bm s_{-n},\tilde{H}_{n,m})\Big]$ for all fuzzy number payoffs using eq. (\ref{I}).
 \end{itemize}
 \STATE \emph{\textbf{Step 2}}: Calculate the FPR matrix $\textbf{Q}_n=[q_{i,j}]$ as
                                \setcounter{equation}{38}
                                \begin{align}
                                 &q_{ij}= \nonumber \\&\begin{cases}\min\Big\{\zeta\Psi_{i,j}+(1-\zeta)\Lambda_{i,j}+0.5,1\Big\},&\Psi_{i,j}>0,\\
                                                     0.5,&\Psi_{i,j}=0, \\
                                                     1-\min\Big\{\zeta\Psi_{j,i}+(1-\zeta)\Lambda_{j,i}+0.5,1\Big\},&\Psi_{i,j}<0.
                                        \end{cases}
                                \end{align}
                                $\forall i,j\in\mathcal{M}$, where
                                \begin{eqnarray}
                                \Psi_{i,j}\triangleq V_{\tilde{v}_n}\left[\tilde{F_n}(\bm s,\tilde{H}_{n,i})\right]-V_{\tilde{v}_n}\left[\tilde{F_n}(\bm s,\tilde{H}_{n,j})\right]
                                \end{eqnarray}
                                is the absolute difference, and
                                \begin{align}
                                \Lambda_{i,j}&\triangleq\frac{V_{\tilde{v}_n}\left[\tilde{F_n}(\bm s,\tilde{H}_{n,i})\right]-V_{\tilde{v}_n}\left[\tilde{F_n}(\bm s,\tilde{H}_{n,j})\right]}
                                        {V_{\tilde{v}_n}\left[\tilde{F_n}(\bm s,\tilde{H}_{n,j})\right]}\nonumber\\
                                       &=\frac{\Psi_{i,j}}{V_{\tilde{v}_n}\left[\tilde{F_n}(\bm s,\tilde{H}_{n,j})\right]}
                                \end{align}
                                is the relative difference. The $\zeta$ is used to fluctuate the weight of the absolute difference $\Psi_{i,j}$ and the relative difference $\Lambda_{i,j}$.
 \STATE \emph{\textbf{Step 3}}: Set $k=0$, and initialize the priority vector and specify parameter $0<\eta\leqslant1$.
 \STATE \emph{\textbf{Step 4}}: Calculate the term below:
 \begin{equation}
 \varphi_i=\sum_{j\in\mathcal{M}}\left[g(q_{ij})\frac{w^j_n}{w^i_n}-g(q_{ji})\frac{w^i_n}{w^j_n}\right] ~\forall i\in\mathcal{M},
 \end{equation}
 where $g(q_{ij})=9^{2q_{ij}-1}$. If $|\varphi_m|\leqslant\eta$ for all $m\in\mathcal{M}$, stop; otherwise, continue to \emph{\textbf{Step 5}}.
 \STATE \emph{\textbf{Step 5}}: Find out the number $\lambda$ that maximizes $|\varphi_m|$, $m\in\mathcal{M}$, i.e., $\varphi_\lambda=\max_{m\in\mathcal{M}}\{|\varphi_m|\}$, and calculate:
 \begin{eqnarray}
 Y=\sqrt{\left.\left[\sum_{j\in\mathcal{M}\setminus\lambda}g(q_{\lambda,j})\frac{w^j_n}{w^\lambda_n}\right]\middle/\left[\sum_{j\in\mathcal{M}\setminus\lambda}g(q_{j,\lambda})\frac{w^\lambda_n}{w^j_n}\right]\right.},
 \end{eqnarray}
 and
 \begin{eqnarray}
 \phi_m=\begin{cases}Y\times w^m_n,& \qquad m=\lambda,\\
                     w^m_n,& \qquad m\neq\lambda.\\
        \end{cases}
 \end{eqnarray}
 \STATE \emph{\textbf{Step 6}}: Update the priority vector as follows:
 \begin{equation}
 w^m_n=\frac{\phi_m}{\sum_{i\in\mathcal{M}}\phi_i},\qquad\forall m\in\mathcal{M}, \forall n\in\mathcal{N}.
 \end{equation}
\end{algorithmic}
\end{algorithm}

We first rank the fuzzy number payoffs, and then introduce a fuzzy preference relation (FPR) to make a soft measurement of fuzzy numbers, with which the stable priority vector $\bm{w}_n$ of the actions can be derived. The FPR matrix of node $n$ is defined as $\textbf{Q}_n=[q_{i,j}]$ with complementary matrix properties:
\begin{equation}
\left\{
\begin{aligned}
&q_{ij}+q_{ji}=1,\\
&q_{ij}\geqslant0,     \qquad\qquad\forall i,j\in\mathcal{M},\\
&q_{ii}=0.5,
\end{aligned}
\right.
\tag{38}
\end{equation}
where $q_{ij}$ denotes the preference degree of the UAV node $n$ between channel $i$ and $j$.

On this basis, the priority vector can be determined by incorporating the viewpoint projection. Provided the used triangular fuzzy number, the schematic flow of a least deviation algorithm is then illustrated by \textbf{Algorithm 2}. By resorting to the fuzzy-logic to analyze mapped fuzzy payoffs, the fuzzy-space interpolation leads to a desensitization of fast changing environments and fluctuated utilities, hence, each player would learn smoothly dynamic environments and evolve steadily towards a satisfactory solution.

\subsection{Convergence of the Proposed Algorithm}
To demonstrate the convergence of our new fuzzy-learning algorithm for POCs allocations, we then present the following theorem.
\begin{myThe}
{\it The proposed \textbf{Algorithm 1} for the dynamic UAV networks is guaranteed to converge to a stable channel allocation profile, with which the maximal network throughput can be achieved.}
\end{myThe}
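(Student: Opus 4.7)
The plan is to structure the proof as a standard best-response convergence argument lifted into the fuzzy space, and then to tie the limit point to the FNE whose existence was already secured in \textbf{Theorem 1}. First, I would observe that the joint strategy space $\mathcal{S}=\otimes_{n}\mathcal{S}_{n}$ is finite, because each UAV node only chooses a subset of channels from the finite set $\mathcal{M}$ subject to the orthogonality constraint $C4$ and the cardinality bound $\alpha_{n}\leqslant O_{M}$. So any argument that produces strict, monotone progress on a bounded scalar potential will terminate after finitely many iterations.

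Next, I would show that the per-iteration update in eq. (34), driven by the priority vector $\bm{w}_{n}$ computed by \textbf{Algorithm 2}, is a fuzzy best response in the sense of the dominance relation $\gtrapprox$ in \textbf{Lemma 1}. Specifically, because $\bm{w}_{n}$ is derived from the FPR matrix $\mathbf{Q}_{n}$ whose entries compare the relative indices $V_{\tilde{v}_{n}}[\tilde{F}_{n}]$ of the fuzzy payoffs, the chosen channel $s_{n}^{1}(k+1)=\arg\max\bm{w}_{n}$ maximizes the ranking induced by the viewpoint $\tilde{v}_{n}$, and the remaining $s_{n}^{i}(k+1)$ are appended from the orthogonal set so as to satisfy $C2$ in \emph{Step 4}--\emph{Step 5}. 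The key step is then to argue that the sum of individual fuzzy payoffs serves as a potential: from the convexity established in eqs. (\ref{sts})--(\ref{ste}) inside the proof of \textbf{Theorem 2}, every unilateral deviation that strictly increases $V_{\tilde{v}_{n}}[\tilde{F}_{n}]$ for player $n$ cannot decrease the aggregate $\sum_{n}\tilde{F}_{n}$ below its previous dominance level, so the aggregate evolves monotonically in the order $\gtrapprox$. Combined with the finite state space, this forces the sequence of profiles to reach a fixed point in finitely many steps, which is precisely the termination condition (\ref{cri}).

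I would then identify this fixed point as a FNE: at a fixed point no node $n$ can find $\bm{s}_{n}\in\mathcal{S}_{n}$ with $\tilde{F}_{n}(\bm{s}_{n},\bm{s}_{-n}^{*},\tilde{H}_{n,m})$ strictly dominating $\tilde{F}_{n}(\bm{s}_{n}^{*},\bm{s}_{-n}^{*},\tilde{H}_{n,m})$ in the sense of \textbf{Definition 9}, because otherwise \emph{Step 3} of \textbf{Algorithm 1} would have picked that deviation. Since \textbf{Theorem 1} guarantees at least one FNE exists, the algorithm is converging to a genuine equilibrium of $\tilde{\mathcal{G}}$ rather than to an artifact of the heuristic. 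Finally, to conclude the global throughput claim, I would invoke the structure of the FPF in eq. (\ref{F}), which equates each player's payoff with the relevant component of the network utility $U(\bm{s})$ in eq. (9); because the FPF is aligned with the global objective up to the interference coupling captured in $IF$, any profile at which no unilateral deviation is profitable is a local maximum of $U(\bm{s})$ under the constraints $C1$--$C4$, and by the same convexity property used in the FNE proof this local maximum coincides with the global one in the fuzzy-ordering sense.

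The main obstacle I anticipate is the last step: passing from ``no profitable unilateral fuzzy deviation'' to ``maximal network throughput'' is only clean when the game is an exact potential game, and here the alignment between individual FPFs and the aggregate $U(\bm{s})$ is only partial because the interference factor $IF$ introduces cross-player coupling not absorbed into a single scalar potential. I would handle this either by restricting the claim to local optimality with respect to the fuzzy dominance, or by arguing that under the orthogonality constraint $C4$ the coupling term vanishes whenever the deviation stays inside $\mathcal{M}_{OC}$, so that best-response steps behave as in an exact potential game and the terminal profile is globally optimal in expectation over the TFN $\tilde{H}_{n,m}$.
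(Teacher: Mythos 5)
Your overall route is the same as the paper's: the update in \emph{Step 3} is treated as a (fuzzy) best response, each node's utility is therefore non-decreasing, the global utility is bounded because channels and nodes are finite, and hence the iteration stops at a stable profile after finitely many steps; the paper argues exactly this, only far more tersely and without your finite-strategy-space and FNE-identification details. The genuine problem is the step you yourself flag: the claim that the curvature property established in (\ref{sts})--(\ref{ste}) (note it is $\tilde F''_1\leqslant 0$, i.e.\ concavity in the gain variable) makes $\sum_n\tilde F_n$ behave as a potential is a non sequitur. That computation concerns one player's payoff as a function of \emph{its own} channel gain $\tilde H_{n,m}$; it says nothing about the externality that a unilateral channel switch by node $n$ imposes on its neighbours through the interference term $I_n$ (and through $IF_{s^i_n}$ in the generalized throughput). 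A deviation that raises $V_{\tilde v_n}\bigl[\tilde F_n\bigr]$ can strictly lower other players' payoffs, so monotonicity of the aggregate in the order $\gtrapprox$ does not follow, and without it neither the finite-termination-via-potential step nor the final ``maximal network throughput'' conclusion is justified. Your fallback is also unsound: constraint $C4$ only forces the channels assigned to a \emph{single} node to be mutually orthogonal; it does not orthogonalize different nodes' channels, so the cross-player coupling does not vanish when a deviation stays inside $\mathcal{M}_{OC}$.

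Two further cautions. First, identifying the terminal profile with an FNE needs more than you give: all nodes update simultaneously from fluctuating observations, and the stopping rule (\ref{cri}) only tests that each node's utility improved by less than a threshold between consecutive slots, which is weaker than ``no profitable unilateral deviation against the fixed $\bm s^*_{-n}$ exists''; you would have to argue that the priority-vector maximization in eq.\ (\ref{sn}) genuinely enumerates best responses at the fixed point. Second, be aware that the paper's own proof makes precisely the leap you worry about---it asserts that individually non-decreasing rewards imply increasing network throughput and that stability implies maximality---so the statement your argument honestly supports (and, arguably, all the paper's argument supports) is convergence to a stable, locally unimprovable profile, not global throughput maximality, unless an exact or ordinal potential structure for this interference-coupled game is established separately.
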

\begin{proof}
For our robust fuzzy-learning algorithm for distributed POCs allocations, in \emph{\textbf{Step 3}} of \textbf{Algorithm 1}, due to the best response of strategy updating with fuzzy logic, the reward of UAV node $n$ is always non-decreasing, and thus the network throughput will increase until the global stability is achieved. Owing to the limitation of channel resources and UAV nodes, on the other hand, the global QoS utility is up bounded. Thus, the proposed algorithm would converge finally to a stable channel allocation profile after finite iterations, i.e., no further throughput improvement can be made as the maximal network throughput would have been achieved. Based on the above statements, \textbf{Theorem 4} is proved.
\end{proof}

\section{Simulation Results}
In this section, numerical simulations are provided to demonstrate the performance of our robust fuzzy-learning algorithm in the context of self-adaption POCs allocation in mesh UAV communication networks.
In our following analysis, the size of 3-D space is configured to $200\times200\times200$ m$^3$. The maximum number of UAV nodes for a cluster is 6, i.e., $C_{th}=6$.
Transmission powers of CH and CM are $P_1=10$dBm and $P_0=-10$dBm, respectively. The available channels for UAV networks are specified by IEEE 802.11b/g standard, i.e. $|\mathcal{M}|=11$, $\tau=5$ and $O_M=3$.
In order to characterize different movement types, the normalized uncertain boundary of varying channel gains is $\Delta h_{n,m}/\hat{h}_{n,m}\in[10^{-3}, 1]$.
The AWGN variance and the pass-loss exponent of channels are set as $\sigma^2_m=-80$dBm and $\varsigma_m=2$, $m\in\mathcal{M}$, respectively.
Other constant parameters for implementing fuzzy-learning algorithm are set to $\zeta=0.5$ and $\eta=0.8$. The viewpoint of UAV node is assume to be \emph{neutral}.

In the following, we firstly explained the mesh-structured UAV network used in our simulations.
Then, the convergence performance of our proposed scheme is provided, and both achievable rate and generalized throughput of our fuzzy-learning algorithm are compared with that of its counterpart.
Finally, the system performance is demonstrated via two main metrics: the number of active links, and the network throughput with the assurance of QoS.
Note that, all numerical results are derived from 50 independently simulated UAV network topologies and 100 trials for each network topology.

\begin{figure}[!t]           
 \centering
 \includegraphics[width=75mm]{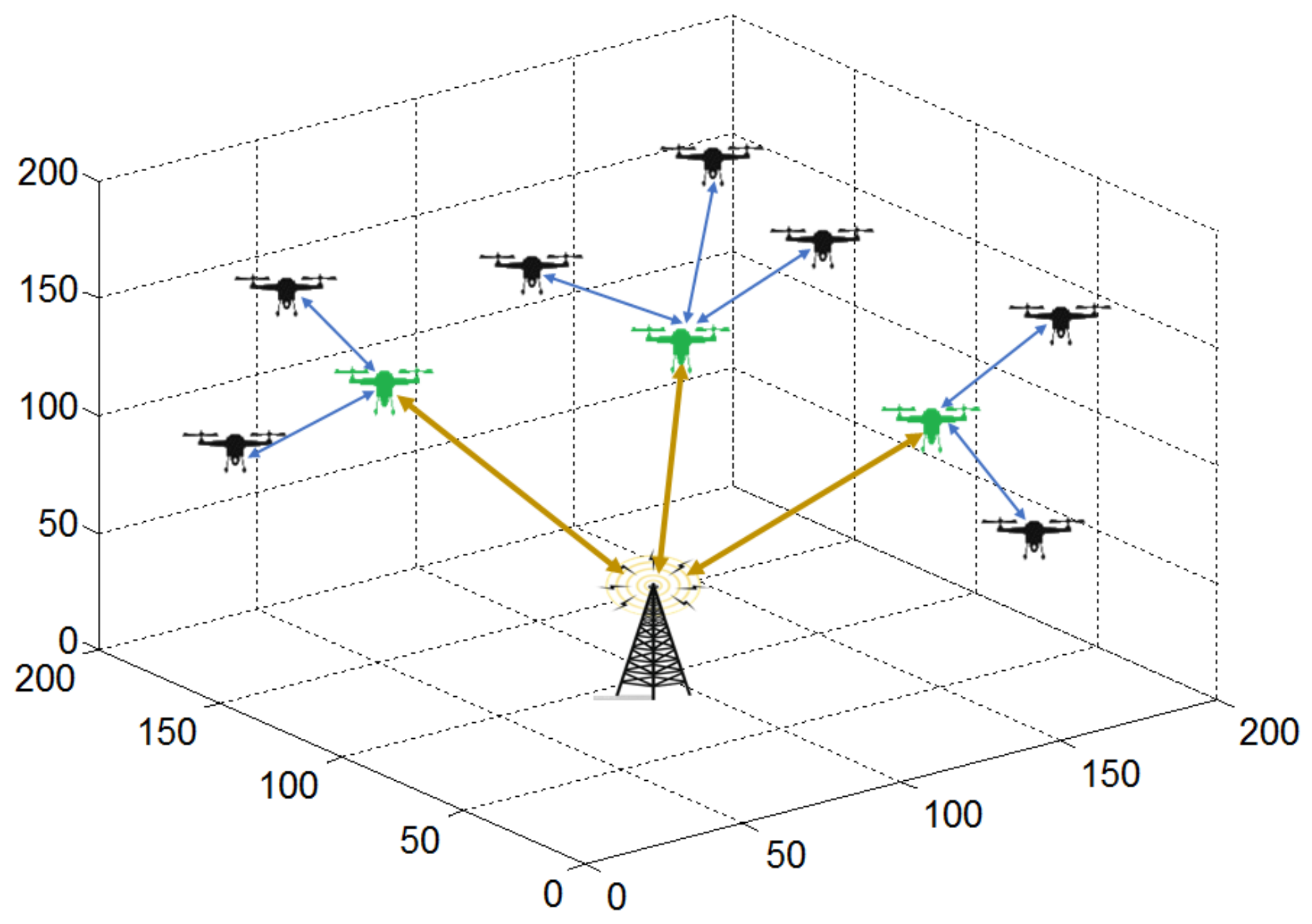}
 \caption{Illustrative diagrams of the mesh structured UAV network model}
\end{figure}
\subsection{Mesh UAV Network Model}
A diagram of simulated mesh UAV network is shown by Fig. 5, which involves a GCS and multiple UAV nodes. The network size is $N=10$, whereby total 10 UAV nodes formed 3 clusters (one may refer to some related works for cluster formulation algorithms, and here we just assume clustering is based on the spatial distances). So, the required number of outer-links is 3. For a star-structured UAV network, the number of required outer-links equals exactly to the number of UAV nodes. In comparison, the long-distance outer-links will be limited in a mesh UAV network, and more importantly, the QoS requirements can be fulfilled with a much lower transmission power (dominated by short-distance inter-links). Thus, in comparing with a star architecture, the mesh configuration will be more preferable for UAV communication networks.

\begin{figure}[!t]           
 \centering
 \includegraphics[width=75mm]{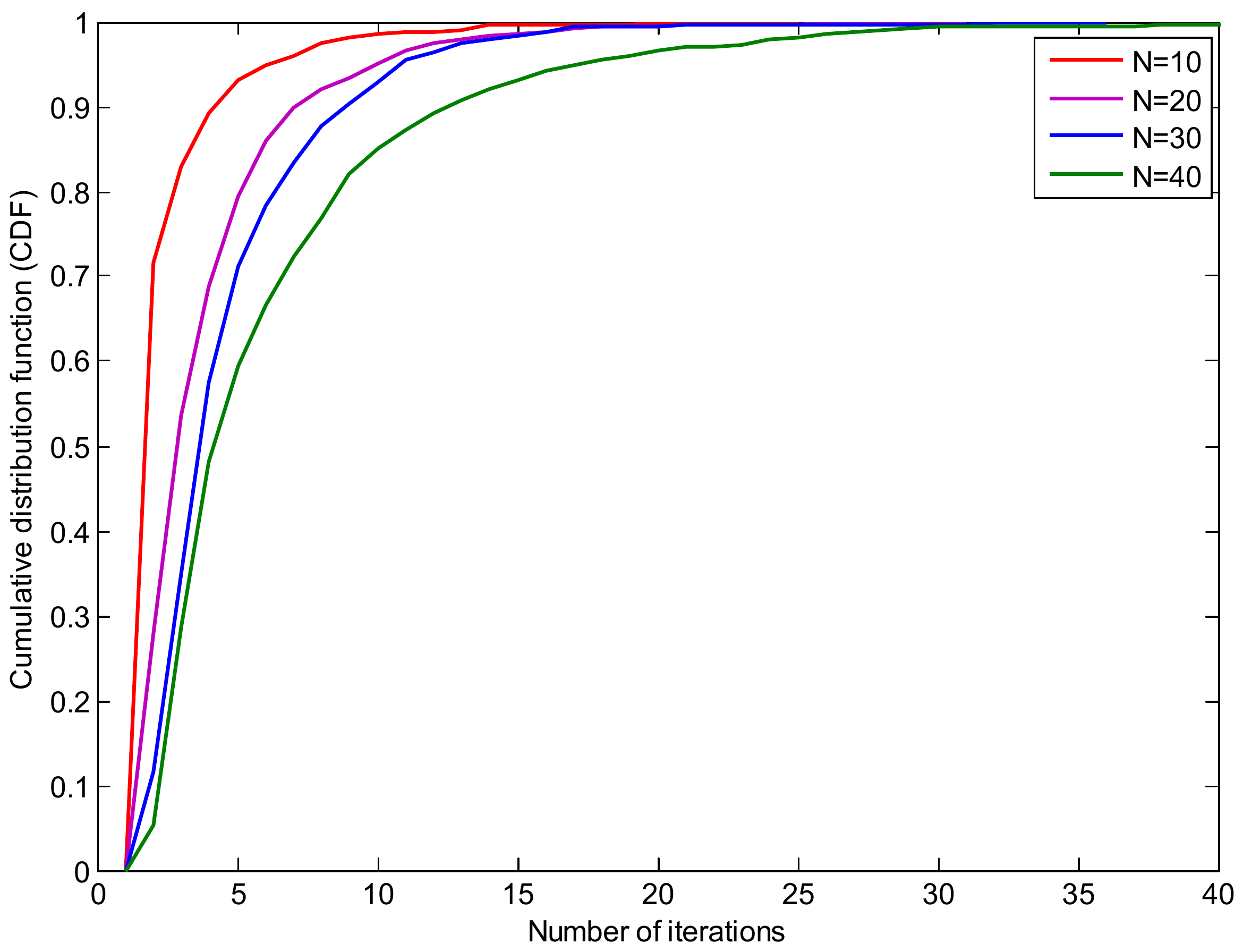}
 \caption{Convergence speed of our distributed POCs allocation algorithm with different UAV network size.}
\end{figure}
\subsection{Convergence Performance}
We then evaluate our proposed scheme in the context of dynamically uncertain UAV communication environments. First, the cumulative distribution function (CDF) of required iterations to achieve a satisfactory solution is presented in Fig. 6, which gives a indicator of convergence speed of our proposed fuzzy-learning scheme from a statistical perspective.
From numerical results, it seems that the iterations needed for convergence is positively related with the total number of UAV nodes. This is relatively easy to follow, i.e. the larger network size needs more iteration to achieve convergence. Besides, we note that the convergence of our proposed method is relatively rapid even for a larger network size, and the mean values of required iteration under different UAV network sizes ($N$=10, 20, 30, 40) are about 3, 4, 6 and 8, respectively. Such a rapid convergence property makes our scheme particularly attractive to the robust accessing in energy-constrained UAV communication networks.

\begin{figure}[!t]
\centering
\subfigure[] {\label{fig:a}
\centering
\includegraphics[width=75mm]{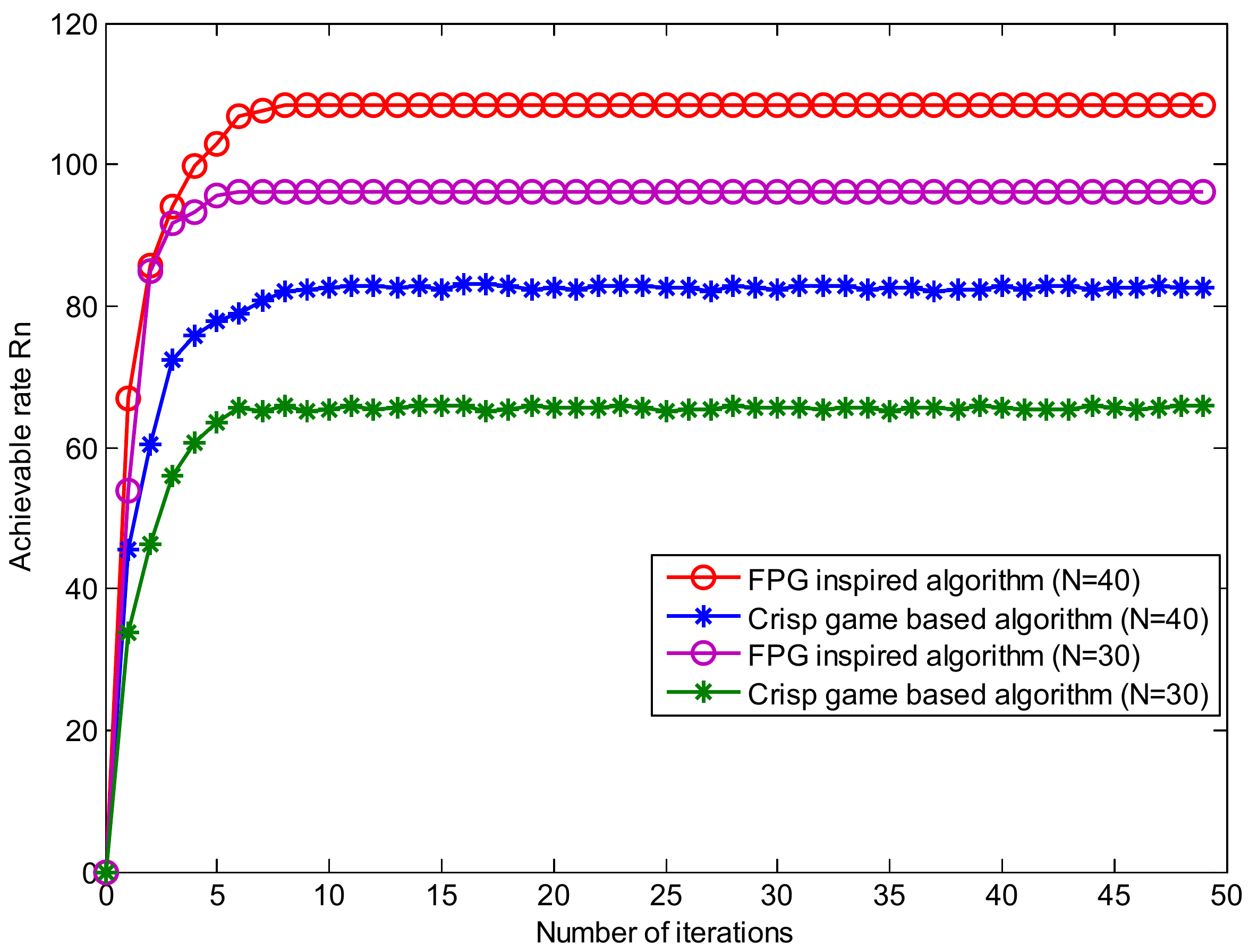}
}
\subfigure[] {\label{fig:b}
\centering
\includegraphics[width=75mm]{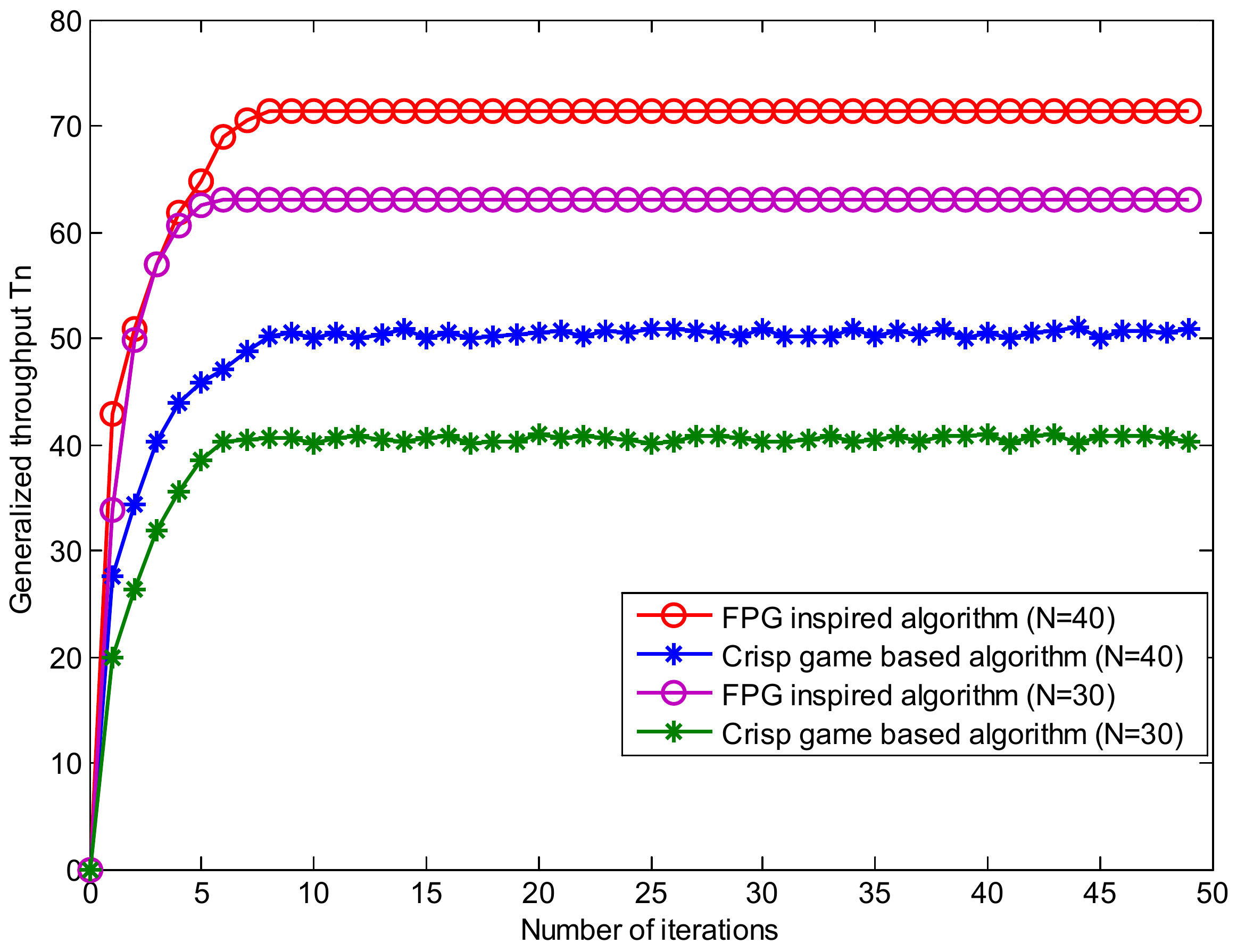}
}
\caption{The comparison of achievable rate and generalized throughput between fuzzy game based and crisp game based algorithms. (a) Achievable rate $R_n(\bm{s}_n)$; (b) Generalized throughput $T_n(\bm{s}_n)$.}
\end{figure}

After the convergence, comparative results for both achievable rate and generalized throughput between our proposed fuzzy-learning algorithm and its counterpart, i.e. the crisp-game theoretical algorithm, are plotted together in Fig. 7.
First, as far as these two different performance metrics are concerned, the generalized throughput would be less than the achievable rate under the same parameter configurations, which is consistent with our previous definitions (i.e., the generalized throughput is further weighted by the connectivity factor and interference factor). More importantly, with our new fuzzy-learning algorithm, the attained maximum utility of our proposed scheme will dramatically outperforms that of a conventional scheme, no matter what the performance metric is and how larger the UAV network size is. Taking the generalized throughput under $N=40$ for example, the total throughput of our new method converges to 71, whilst the crisp-game based algorithm can only approach 50. That is, the significant improvement, i.e. around $60\%$, is achieved by our proposed scheme.

The main reason is that, due to the lack of mechanisms to combat the randomly fluctuated utility, most conventional learning-based methods will be inevitably influenced by dynamic environments, and thereby fail to achieve the satisfactory solution. In contrast, by the implementing learning and updating in a mapped fuzzy space, our proposed scheme is basically immune to the involved dynamics and uncertainties, which is hence more competitive in identifying the optimal solution to POCs allocations and enables robust accessing even in dynamic UAV communication networks.

\subsection{System Performance}
We further study the performance of various multi-channel allocation algorithms under difference system configurations, i.e. the UAV network size $N$.
First, we are interested in the number of active links under different UAV network size.
Then, we present the comparative results, i.e. the achieved network throughput, of our proposed scheme and its counterpart methods, i.e. the crisp-game based algorithm and another random selection approach.

\subsubsection{Number of Active Links}
For our proposed fuzzy-learning scheme, the permitted number of parallel active links under different UAV network sizes is illustrated in Fig. 8.
It is found that, the numerical derived curve can be partitioned into an unsaturated regime and a saturated regime, with a regime bound of $N=70$.
In the left unsaturated regime, the number of active links would increases with total UAV nodes number $N$. In the right saturated regime, the number of active links would remain unchanged, which means the network capacity has an up-bound even considering the channel reuse, due to serious coupling interference.
For a specific parameter configuration, it is shown that the maximum number of active links is 35, and a maximal channel reuse ratio is around 3.

\subsubsection{Achieved Network Throughput}
Furthermore, we study the network throughput of our proposed scheme and the counterpart approaches under various UAV network size.
The expected network throughput is shown by Fig. 9, which simultaneously gives the mean value and the variance of network throughput with three allocation schemes.
It is observed that our proposed fuzzy-learning algorithm would significantly outperform the other two approaches no matter what the UAV nodes number is.
Our new method can attain the superior network performance (i.e. higher mean) with the more favorable stability (i.e. lower variance). In comparison, the other existing approaches may become less competitive, as far as robust accessing in dynamic environments is concerned, especially for a crisp-learning scheme with which the performance variance may even surpass its mean value (e.g. $N=10$ and $N=20$).

\begin{figure}[!t]           
 \centering
 \includegraphics[width=75mm]{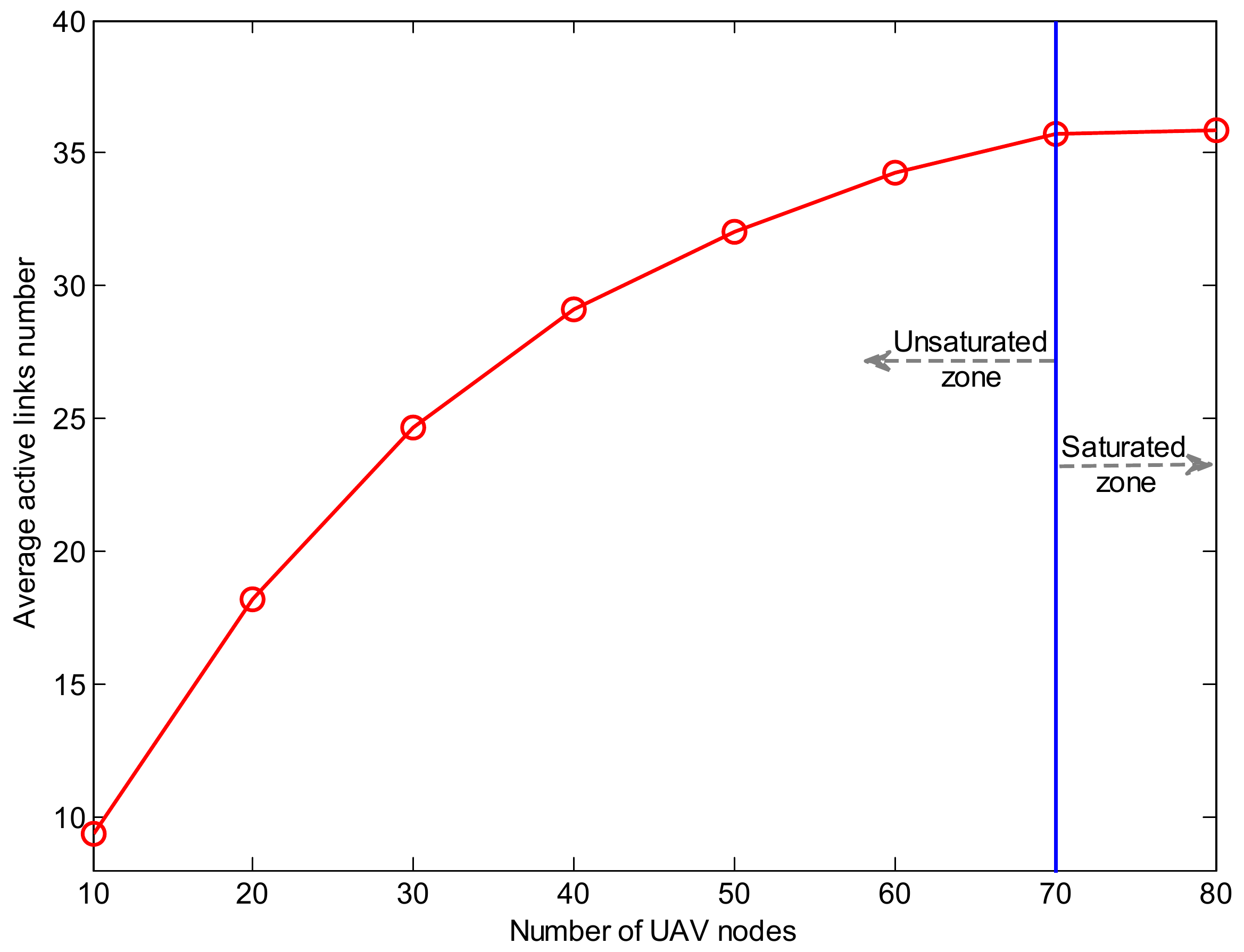}
 \caption{The number of active links}
\end{figure}

In particular, we noted from Fig. 9(a) that the network throughput achieved by a crisp-game algorithm is slightly greater than that of another random selection approach. In other words, a classical crisp-game algorithm, developed for most static environments (i.e. with time-independent channel gain and mutual interference), would become basically invalid, whose learning behavior was completely undermined by the constantly changing network topology and the time-varying channels or local utilities. In sharp contrast, despite the mobility of UAV nodes and the resulting dynamically uncertain CSI, with the fuzzy-space mapping, our proposed algorithm can still achieve the optimal POCs allocation by maximizing the network throughput.

In addition, from Fig. 9(b), we found that the performance variance of conventional crisp-learning algorithms is even inferior to that of a random selection approach, which further demonstrated the extreme vulnerability of a crisp-game method when handling the dynamically uncertain information in practical scenarios. And hence, it loses the effectiveness in channel allocations for dynamic UAV networks. This problem may hold for a large class of greedy-based learning schemes, whereby the temporal variations in local utility may trigger the impetuous response in updating strategies, causing sharp fluctuations in strategies and leading to random evolution behaviors.
By presenting the appealing fuzzy-space learning framework, our proposed scheme can cope with this challenging problem. As highlighted, it is capable of desensitizing the fluctuated utility, which thereby ensures robust accessing even in dynamic UAV networks, by producing a much lower variance in achieved performance.

\begin{figure}
\centering
\subfigure[] {\label{fig:a}
\centering
\includegraphics[width=75mm]{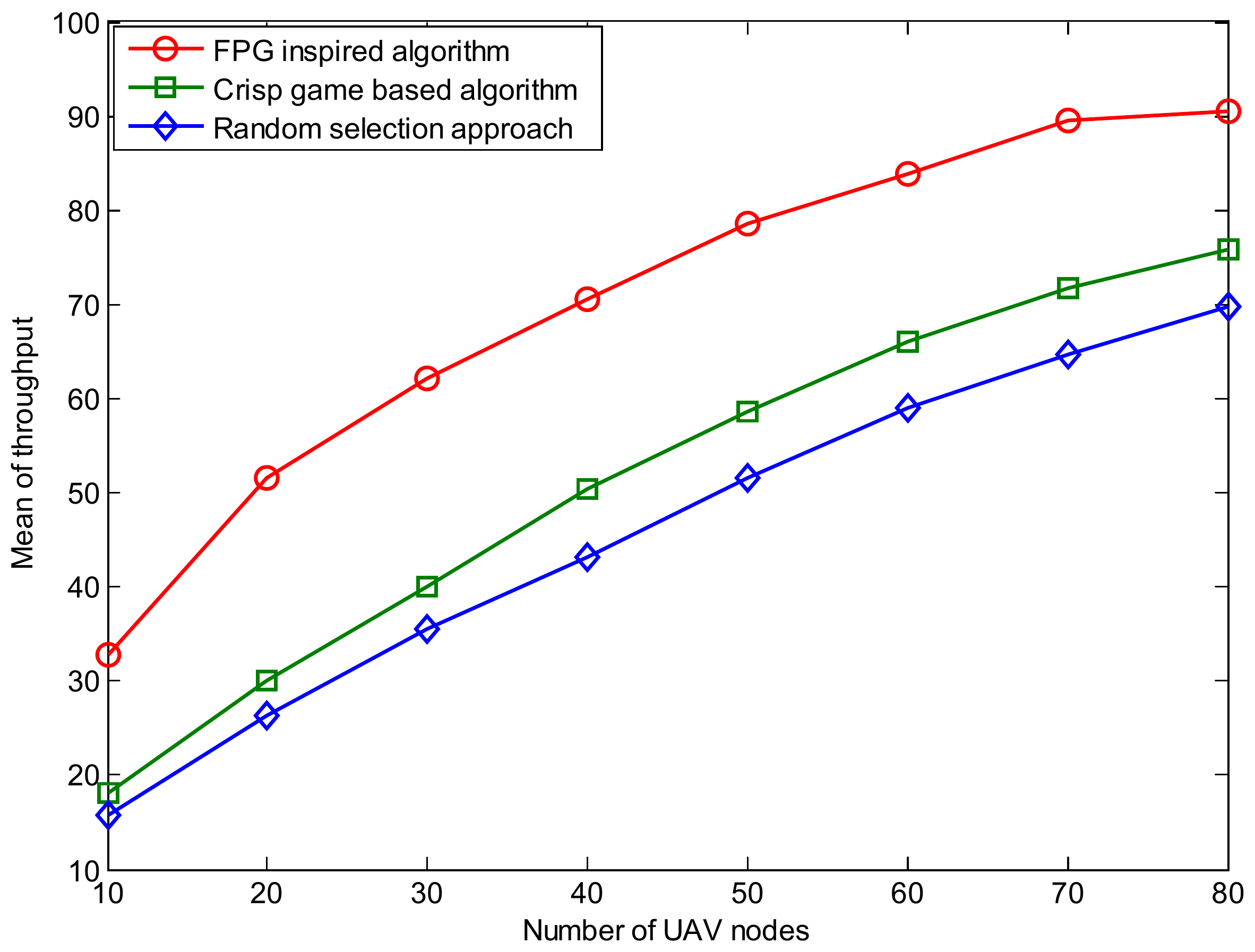}
}
\subfigure[] {\label{fig:b}
\centering
\includegraphics[width=75mm]{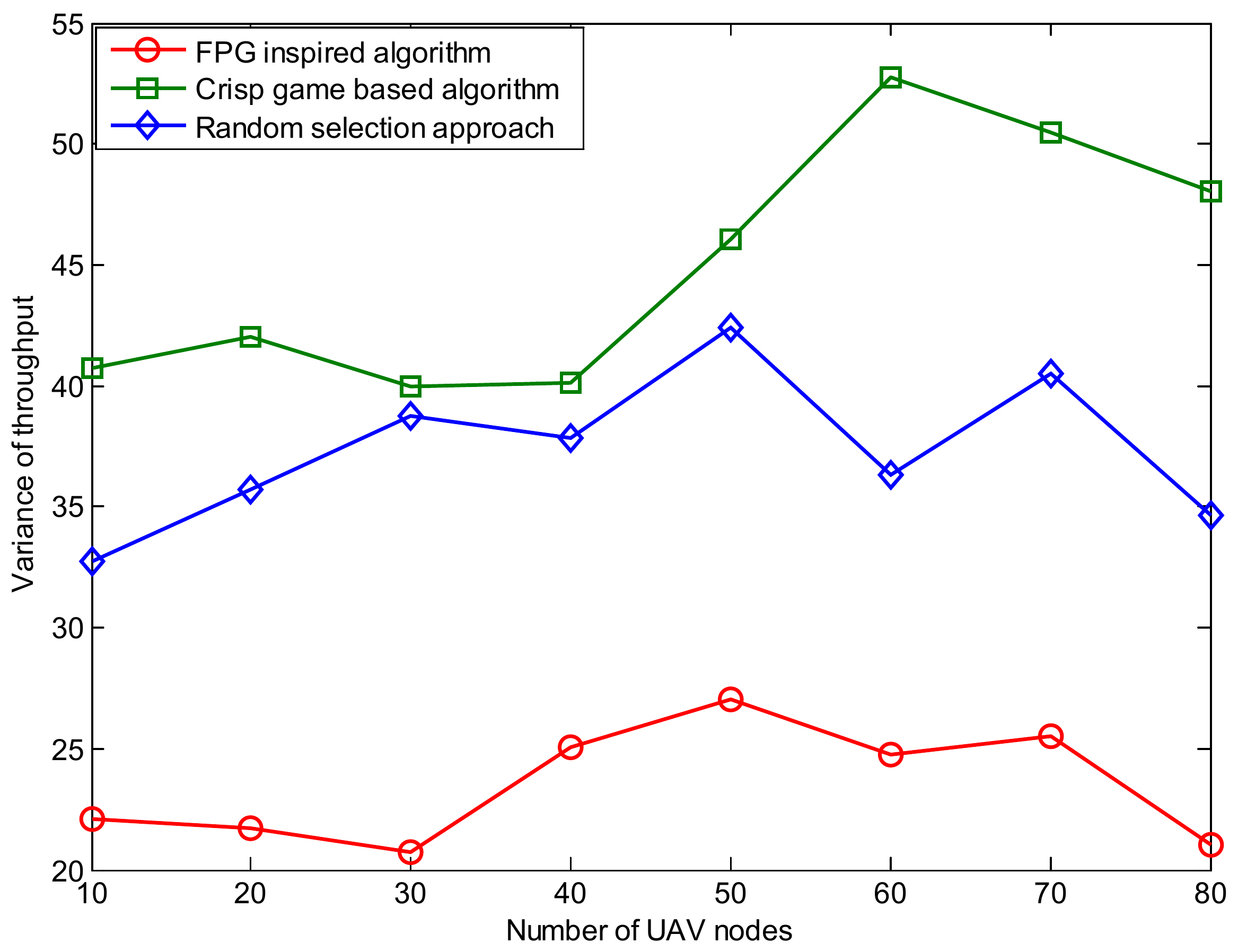}
}
\caption{The comparisons of the network throughput performances among the proposed scheme and the counterparts. (a) Throughput mean of different schemes; (b) Throughput variance of different schemes.}
\end{figure}
\section{Conclusion}
In this paper, the optimal POCs allocation with multiple constraints in dynamic mesh-structured UAV communication networks is studied.
By projecting the randomly fluctuated utility into another fuzzy-space, a robust fuzzy-learning paradigm for distributed POCs allocations is developed to cope with the major challenges of UAV communication networks, i.e. dynamically uncertain environment and energy-constrained deployment.
As opposite to the most existing learning schemes directly operated in the observational space, our proposed scheme implements the learning and updating in a mapped fuzzy space, whereby the fluctuated utilities are interpreted with fuzzy numbers and the decisions are made on the basis of the derived priority vectors.
Our new scheme is characterized by its appealing desensitization of dynamically uncertain CSI and fluctuated utilities, which would effectively combat the oscillation effects in decisions and thereby ensure the robust accessing even in dynamic UAV communication networks. Potential advantages of our algorithm are also demonstrated by numerical results.
Attributed to its notable stability and robustness, our proposed algorithm is capable of achieving the maximum network throughput and enhancing resource efficiency even in dynamically changing environments.
As a consequence, our proposed robust fuzzy-learning scheme will be of significant promise for the emerging UAV applications.

\section*{Acknowledgment}
This work was supported by BUPT Excellent Ph.D. Students Foundation under Grant No. CX2017209, and Natural Science Foundation of China (NSFC) under Grant No. 61471061.


%

\bibliography{myreference}

\begin{thebibliography}{10}
\providecommand{\url}[1]{#1}
\csname url@samestyle\endcsname
\providecommand{\newblock}{\relax}
\providecommand{\bibinfo}[2]{#2}
\providecommand{\BIBentrySTDinterwordspacing}{\spaceskip=0pt\relax}
\providecommand{\BIBentryALTinterwordstretchfactor}{4}
\providecommand{\BIBentryALTinterwordspacing}{\spaceskip=\fontdimen2\font plus
\BIBentryALTinterwordstretchfactor\fontdimen3\font minus
  \fontdimen4\font\relax}
\providecommand{\BIBforeignlanguage}[2]{{%
\expandafter\ifx\csname l@#1\endcsname\relax
\typeout{** WARNING: IEEEtran.bst: No hyphenation pattern has been}%
\typeout{** loaded for the language `#1'. Using the pattern for}%
\typeout{** the default language instead.}%
\else
\language=\csname l@#1\endcsname
\fi
#2}}
\providecommand{\BIBdecl}{\relax}
\BIBdecl

\bibitem{zeng2016wireless}
Y.~Zeng, R.~Zhang, and T.~J. Lim, ``Wireless communications with unmanned
  aerial vehicles: opportunities and challenges,'' \emph{IEEE Communications
  Magazine}, vol.~54, no.~5, pp. 36--42, 2016.

\bibitem{koulali2016a}
S.~Koulali, E.~Sabir, T.~Taleb, and M.~Azizi, ``A green strategic activity
  scheduling for uav networks: A sub-modular game perspective,'' \emph{IEEE
  Communications Magazine}, vol.~54, no.~5, pp. 58--64, 2016.

\bibitem{gupta2016survey}
L.~Gupta, R.~Jain, and G.~Vaszkun, ``Survey of important issues in uav
  communication networks,'' \emph{IEEE Communications Surveys \& Tutorials},
  vol.~18, no.~2, pp. 1123--1152, 2016.

\bibitem{jiang2012optimization}
F.~Jiang and A.~L. Swindlehurst, ``Optimization of uav heading for the
  ground-to-air uplink,'' \emph{IEEE Journal on Selected Areas in
  Communications}, vol.~30, no.~5, pp. 993--1005, 2012.

\bibitem{goddemeier2012role}
N.~Goddemeier, K.~Daniel, and C.~Wietfeld, ``Role-based connectivity management
  with realistic air-to-ground channels for cooperative uavs,'' \emph{IEEE
  Journal on Selected Areas in Communications}, vol.~30, no.~5, pp. 951--963,
  2012.

\bibitem{liu2014hawk}
Z.~Liu, Y.~Chen, B.~Liu, C.~Cao, and X.~Fu, ``Hawk: an unmanned
  mini-helicopter-based aerial wireless kit for localization,'' \emph{IEEE
  Transactions on Mobile Computing}, vol.~13, no.~2, pp. 287--298, 2014.

\bibitem{naeem2014resource}
M.~Naeem, A.~Anpalagan, M.~Jaseemuddin, and D.~C. Lee, ``Resource allocation
  techniques in cooperative cognitive radio networks,'' \emph{IEEE
  Communications Surveys and Tutorials}, vol.~16, no.~2, pp. 729--744, 2014.

\bibitem{andrews2014will}
J.~G. Andrews, S.~Buzzi, W.~Choi, S.~V. Hanly, A.~Lozano, A.~C. Soong, and
  J.~C. Zhang, ``What will 5g be?'' \emph{IEEE Journal on selected areas in
  communications}, vol.~32, no.~6, pp. 1065--1082, 2014.

\bibitem{chao2017learning}
C.~Fan, B.~Li, C.~Zhao, W.~Guo, and Y.~C. Liang, ``Learning-based spectrum
  sharing and spatial reuse in mm-wave ultra dense networks,'' \emph{IEEE
  Transactions on Vehicular Technology}, vol.~PP, no.~99, pp. 1--1, 2017.

\bibitem{orfanus2016self}
D.~Orfanus, E.~P. De~Freitas, and F.~Eliassen, ``Self-organization as a
  supporting paradigm for military uav relay networks,'' \emph{IEEE
  Communications Letters}, vol.~20, no.~4, pp. 804--807, 2016.

\bibitem{duarte2012partially}
P.~B. Duarte, Z.~M. Fadlullah, A.~V. Vasilakos, and N.~Kato, ``On the partially
  overlapped channel assignment on wireless mesh network backbone: A game
  theoretic approach,'' \emph{IEEE Journal on Selected Areas in
  Communications}, vol.~30, no.~1, pp. 119--127, 2012.

\bibitem{cui2011partially}
Y.~Cui, W.~Li, and X.~Cheng, ``Partially overlapping channel assignment based
  on ¡°node orthogonality¡± for 802.11 wireless networks,'' in \emph{INFOCOM,
  2011 Proceedings IEEE}.\hskip 1em plus 0.5em minus 0.4em\relax IEEE, 2011,
  pp. 361--365.

\bibitem{yong2016partially}
Y.~Su, Y.~Wang, Y.~Zhang, Y.~Liu, and J.~Yuan, ``Partially overlapped channel
  interference measurement implementation and analysis,'' in \emph{2016 IEEE
  Conference on Computer Communications Workshops (INFOCOM WKSHPS)}, April
  2016, pp. 760--765.

\bibitem{si2015dynamic}
P.~Si, F.~R. Yu, R.~Yang, and Y.~Zhang, ``Dynamic spectrum management for
  heterogeneous uav networks with navigation data assistance,'' in
  \emph{Wireless Communications and Networking Conference (WCNC), 2015
  IEEE}.\hskip 1em plus 0.5em minus 0.4em\relax IEEE, 2015, pp. 1078--1083.

\bibitem{li2017optimal}
J.~Li and Y.~Han, ``Optimal resource allocation for packet delay minimization
  in multi-layer uav networks,'' \emph{IEEE Communications Letters}, vol.~21,
  no.~3, pp. 580--583, 2017.

\bibitem{tang2017ac}
F.~Tang, Z.~M. Fadlullah, N.~Kato, and R.~Miura, ``Ac-poca: Anti-coordination
  game based partially overlapping channels assignment in combined uav and d2d
  based networks,'' \emph{IEEE Transactions on Vehicular Technology}, 2017.

\bibitem{mishra2006partially}
A.~Mishra, V.~Shrivastava, S.~Banerjee, and W.~Arbaugh, ``Partially overlapped
  channels not considered harmful,'' in \emph{ACM SIGMETRICS Performance
  Evaluation Review}, vol.~34, no.~1.\hskip 1em plus 0.5em minus 0.4em\relax
  ACM, 2006, pp. 63--74.

\bibitem{ding2012using}
Y.~Ding, Y.~Huang, G.~Zeng, and L.~Xiao, ``Using partially overlapping channels
  to improve throughput in wireless mesh networks,'' \emph{IEEE Transactions on
  Mobile Computing}, vol.~11, no.~11, pp. 1720--1733, 2012.

\bibitem{duarte2010partially}
P.~B. Duarte, Z.~M. Fadlullah, K.~Hashimoto, and N.~Kato, ``Partially
  overlapped channel assignment on wireless mesh network backbone,'' in
  \emph{Global Telecommunications Conference (GLOBECOM 2010), 2010 IEEE}.\hskip
  1em plus 0.5em minus 0.4em\relax IEEE, 2010, pp. 1--5.

\bibitem{li2017throughput}
J.~Li, Y.~Cheng, X.~Jia, and L.~M. Ni, ``Throughput optimization in
  wlan/cellular integrated network using partially overlapped channels,''
  \emph{IEEE Transactions on Wireless Communications}, vol.~PP, no.~99, pp.
  1--1, 2017.

\bibitem{zhao2016dapa}
W.~Zhao, H.~Nishiyama, Z.~M. Fadlullah, N.~Kato, and K.~Hamaguchi, ``Dapa:
  Capacity optimization in wireless networks through a combined design of
  density of access points and partially overlapped channel allocation,''
  \emph{IEEE Transactions on Vehicular Technology}, vol.~65, no.~5, pp.
  3715--3722, 2016.

\bibitem{liu2010load}
Y.~Liu, R.~Venkatesan, and C.~Li, ``Load-aware channel assignment exploiting
  partially overlapping channels for wireless mesh networks,'' in \emph{2010
  IEEE Global Telecommunications Conference GLOBECOM 2010}, Dec 2010, pp. 1--5.

\bibitem{xu2013opportunistic}
Y.~Xu, Q.~Wu, J.~Wang, L.~Shen, and A.~Anpalagan, ``Opportunistic spectrum
  access using partially overlapping channels: Graphical game and uncoupled
  learning,'' \emph{IEEE Transactions on Communications}, vol.~61, no.~9, pp.
  3906--3918, 2013.

\bibitem{fan2017robust}
C.~Fan, B.~Li, Y.~Zhang, and C.~Zhao, ``Robust dynamic spectrum access in
  uncertain channels: a fuzzy payoffs game approach,'' in \emph{Global
  Telecommunications Conference, 2017. IEEE GLOBECOM 2017. IEEE}.\hskip 1em
  plus 0.5em minus 0.4em\relax IEEE, 2017.

\bibitem{nguyen2013energy}
J.~Nguyen, N.~Lawrance, R.~Fitch, and S.~Sukkarieh, ``Energy-constrained motion
  planning for information gathering with autonomous aerial soaring,'' in
  \emph{2013 IEEE International Conference on Robotics and Automation}, May
  2013, pp. 3825--3831.

\bibitem{Perlaza2012quality}
S.~M. Perlaza, H.~Tembine, S.~Lasaulce, and M.~Debbah, ``Quality-of-service
  provisioning in decentralized networks: A satisfaction equilibrium
  approach,'' \emph{IEEE Journal of Selected Topics in Signal Processing},
  vol.~6, no.~2, pp. 104--116, April 2012.

\bibitem{li2013effective}
D.-F. Li, ``An effective methodology for solving matrix games with fuzzy
  payoffs,'' \emph{IEEE Transactions on Cybernetics}, vol.~43, no.~2, pp.
  610--621, 2013.

\bibitem{breheny2003using}
S.~H. Breheny, R.~D'Andrea, and J.~C. Miller, ``Using airborne vehicle-based
  antenna arrays to improve communications with uav clusters,'' in \emph{42nd
  IEEE International Conference on Decision and Control (IEEE Cat.
  No.03CH37475)}, vol.~4, Dec 2003, pp. 4158--4162 vol.4.

\bibitem{li2015deep}
B.~Li, S.~Li, A.~Nallanathan, and C.~Zhao, ``Deep sensing for future spectrum
  and location awareness 5g communications,'' \emph{IEEE Journal on Selected
  Areas in Communications}, vol.~33, no.~7, pp. 1331--1344, 2015.

\bibitem{feng2008characterizing}
Z.~Feng and Y.~Yang, ``Characterizing the impact of partially overlapped
  channel on the performance of wireless networks,'' in \emph{Global
  Telecommunications Conference, 2008. IEEE GLOBECOM 2008. IEEE}.\hskip 1em
  plus 0.5em minus 0.4em\relax IEEE, 2008, pp. 1--6.

\bibitem{bouachir2014mobility}
O.~Bouachir, A.~Abrassart, F.~Garcia, and N.~Larrieu, ``A mobility model for
  uav ad hoc network,'' in \emph{Unmanned Aircraft Systems (ICUAS), 2014
  International Conference on}.\hskip 1em plus 0.5em minus 0.4em\relax IEEE,
  2014, pp. 383--388.

\bibitem{feng2008how}
Z.~Feng and Y.~Yang, ``How much improvement can we get from partially
  overlapped channels?'' in \emph{2008 IEEE Wireless Communications and
  Networking Conference}, March 2008, pp. 2957--2962.

\bibitem{li2016deep}
B.~Li, J.~Hou, X.~Li, Y.~Nan, A.~Nallanathan, and C.~Zhao, ``Deep sensing for
  space-time doubly selective channels: When a primary user is mobile and the
  channel is flat rayleigh fading,'' \emph{IEEE Transactions on Signal
  Processing}, vol.~64, no.~13, pp. 3362--3375, 2016.

\bibitem{hasan2015distributed}
M.~Hasan and E.~Hossain, ``Distributed resource allocation for relay-aided
  device-to-device communication under channel uncertainties: A stable matching
  approach,'' \emph{IEEE Transactions on Communications}, vol.~63, no.~10, pp.
  3882--3897, 2015.

\bibitem{zimmermann2011fuzzy}
H.-J. Zimmermann, \emph{Fuzzy set theory¡ªand its applications}.\hskip 1em plus
  0.5em minus 0.4em\relax Springer Science \& Business Media, 2011.

\bibitem{lee1999method}
H.~Lee-Kwang and J.-H. Lee, ``A method for ranking fuzzy numbers and its
  application to decision-making,'' \emph{IEEE Transactions on Fuzzy Systems},
  vol.~7, no.~6, pp. 677--685, 1999.

\bibitem{chakeri2013fuzzy}
A.~Chakeri and F.~Sheikholeslam, ``Fuzzy nash equilibriums in crisp and fuzzy
  games,'' \emph{IEEE Transactions on Fuzzy Systems}, vol.~21, no.~1, pp.
  171--176, 2013.

\bibitem{cunlin2011nash}
L.~Cunlin and Z.~Qiang, ``Nash equilibrium strategy for fuzzy non-cooperative
  games,'' \emph{Fuzzy Sets and Systems}, vol. 176, no.~1, pp. 46--55, 2011.

\bibitem{xu2005least}
Z.~Xu and Q.~Da, ``A least deviation method to obtain a priority vector of a
  fuzzy preference relation,'' \emph{European Journal of Operational Research},
  vol. 164, no.~1, pp. 206--216, 2005.

\end{thebibliography}
\bibliographystyle{IEEEtran}

\end{document}